\newtheorem{proposition}{Proposition}
\newtheorem{lemma}{Lemma}
\newtheorem{definition}{Definition}
\newtheorem{example}{Example}
\begin{document}
\title{Empathy in Bimatrix Games}
\author{  Brian Powers, Michalis Smyrnakis and Hamidou Tembine\thanks{M. Smyrnakis and H. Tembine are with  Learning and Game Theory Laboratory, New York University Abu Dhabi}
\thanks{B. Powers is with Arizona State University, Tempe, Arizona.}
}
\maketitle

\begin{abstract}
Although the definition of what empathetic preferences exactly are is still evolving, there is a general consensus in the psychology, science and engineering  communities
 that the evolution toward players' behaviors in interactive decision-making problems   will be accompanied by the exploitation of their  empathy, sympathy, compassion, antipathy, spitefulness, selfishness, altruism, and self-abnegating states in the payoffs. In this article, we study  one-shot bimatrix games  from a psychological game theory viewpoint. 
A new empathetic payoff model is calculated to fit  empirical observations and  both pure and mixed equilibria are investigated. 
For a realized  empathy  structure, the bimatrix game is categorized among  four generic class of games.
Number of interesting results are derived.  A notable level of involvement can be observed in the empathetic one-shot game  compared the non-empathetic one and this holds 
even for games with dominated strategies. Partial altruism can help in breaking symmetry, in reducing payoff-inequality and in selecting social welfare and more efficient outcomes.
 By contrast, partial spite and self-abnegating may worsen payoff equity. Empathetic evolutionary game dynamics are introduced to capture the resulting empathetic  evolutionarily stable strategies under wide range of revision protocols including Brown-von Neumann-Nash, Smith, imitation, replicator, and hybrid dynamics. Finally, mutual support and Berge solution are investigated and their connection with empathetic preferences are established.  
We show that pure altruism is logically inconsistent, only by balancing it with some partial selfishness does it create a consistent psychology.

 \end{abstract}
\section{Introduction}
We consider two players. Let $\mathcal{N}=\{1,2\}$ be the set of players. Each player $i$ has a set of actions $\mathcal{A}_i.$ The non-empathetic reward functions of player $i$ is $r_i:\ \prod_{i} \mathcal{A}_i \rightarrow \mathbb{R}.$
We consider empathetic preferences. Players  have preferences on the joint strategy outcomes.   
The outcome of player $i$ choosing action $a_i\in \mathcal{A}_i$ combines her intrinsic preference for $a_i\in \mathcal{A}_i$ with the intrinsic preference of $i$'s neighbors, $\mathcal{N}_i$, where the weight given to the preference of any neighbor $j\in \mathcal{N}_i$ depends on the strength of the directed relationship between $i$ and $j.$ A basic setup and for illustration purpose this is modeled  with  the number $\lambda_{ij}$. By a self-regarding player we refer to a player in the game who optimizes her own-payoff (without empathy for others). A self-regarding player 
thus cares about the behavior  that impact her own payoff.  This is scaled with a number $\lambda_{ii}.$ The sign of $\lambda_{ii}$ plays an important role as it determines if it is a maximization or minimization of own-payoff. An other-regarding player considers not only her own payoff  but also some of her network members' payoffs. Then, the other-regarding player will include these in her preferences and create an empathetic payoff. She is still acting to maximize her new empathetic payoff. Based on these basic empathy structures, we construct an empathetic payoff as a combination of payoffs through a matrix $\Lambda=(\lambda_{ij})_{i,j}.$ In contrast to most of existing studies in this field, we include not only positive value of $\lambda_{ij}$  (referred to as partial altruism) but also negative value of $\lambda_{ij}$ (referred to as partial spite). 
The instant empathetic payoff of $i$ is $$ r_{i}^{\Lambda}:= \lambda_{ii}r_{i}+\sum_{j\in \mathcal{N}_i }\lambda_{ij}r_{j},$$ where $r^I_i:=r_i$ denotes the initial non-empathetic payoff of player $i.$  Based on the material payoffs, a player can have empathy/malice for the others and  selfishness/selflessness for herself.

\begin{figure}[htb]
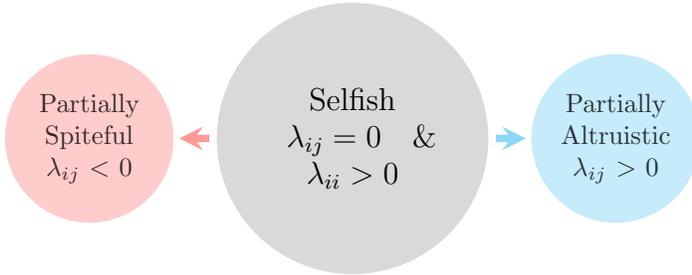

\smartdiagram[constellation diagram]{Selfish ${\lambda_{ij}=0} \quad \& \newline {\lambda_{ii}>0}$, Partially Spiteful $\lambda_{ij}<0$, Partially Altruistic $\lambda_{ij}>0 $ }
 \\
  \caption{Behavior of $i$ towards $j$ for different sign values of $\lambda_{ij}.$ }  \label{fig:diagram}
  \end{figure}

\begin{figure}[htb]
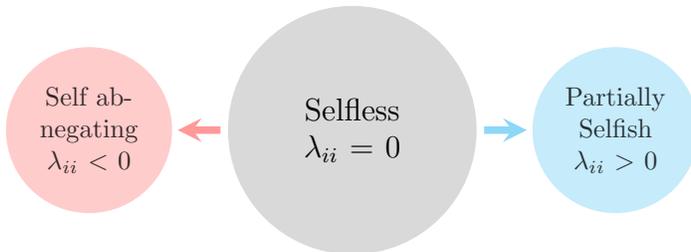

\smartdiagram[constellation diagram]{Selfless $\lambda_{ii}=0$,
  Self abnegating $\lambda_{ii}<0$, Partially Selfish $\lambda_{ii}>0$}\\
  \caption{Behavior of $i$ towards herself for different sign values of $\lambda_{ii}.$ }   \label{fig:diagram2}
  \end{figure}
  The empathetic payoff $r_{i}^{\Lambda}$ can be interpreted as follows.
\begin{itemize} 
\item Selfishness: Selfishness is being concerned  for oneself or one's own payoff, regardless of others' payoffs.  It is the lack of consideration for other players' payoffs.
 If $\lambda_{ij}=0$ we say that $i$ is not empathetic towards $j.$ Player $i$ is self-regarding if $\lambda_{ij}=0$ for all $j\neq i.$ 
If all the $\lambda_{ij}$ are zeros for every $(i,j)$ then, every player focuses on her own-payoff functions positively or negatively depending on the sign of $\lambda_{ii}.$  The case of $\lambda_{ii}>0$  corresponds to the partial selfishness. The case of $\lambda_{ii}<0$ is seen as a self-abnegating behavior (see Figure \ref{fig:diagram2}).

\item Partial Altruism: If $\lambda_{ij}>0 $ we say that $i$ is partially altruistic towards $j.$
If all the $\lambda_{ij}$ are positive  for every $i,j,$ every player is considering the other players in her decision in a partially altruistic way.
\item Partial Spite:  If $\lambda_{ij}< 0$ we say that $i$ is partially spiteful towards $j.$ 
If all the $\lambda_{ij}$ are negative  for every $i,j $ every player is considering the other players in her decision in a partially spiteful way (Figure \ref{fig:diagram}).
\item Mixed altruism-spitefulness-neutrality:  The same player $i$ may have different empathetic behaviors towards her neighbors. For example, if $\lambda_{ij}>0,$  $\lambda_{ik}<0$  and $\lambda_{il}=0$ for $j,k,l\in \mathcal{N}_i$ then player $i$ is partially altruistic towards $j,$  partially spiteful towards $k$ and neutral towards $l.$
\end{itemize}

\subsection{Related work}
In the 1880s,  \cite[pages 102-104]{edge}
 introduced the idea of other-regarding payoff transformations as follows: player $i$  maximizes the payoff function
 $R_i(a_i,a_j)=\lambda r_i+(1-\lambda)r_j$ where $\lambda\in (0,1).$ Here $\lambda$ and $1-\lambda$ represent relative weights that $i$ assigns to $r_i$ (own) and $r_j$(to the other player's) non-empathetic payoff, respectively. 
The work in \cite{ref1} proposed an interesting model of partial altruism as an explanation for the results of public good contribution games, where a player's utility is a linear function of both the player's own monetary payoff and the other players' payoffs. The work in \cite{ref2,mimo} proposed a model that uses both  spite and altruism, where the adjusted utility of a player reflects the player's own utility and his regard for other players. 
A model of fairness is proposed in \cite{ref3} where in addition to purely selfish players, there are players who dislike inequitable outcomes.

\subsection{Contribution}
In this paper we examine  one-shot  $2\times2$ bimatrix games  with empathetic preferences. Our contribution can be summarized as follows. Filling the gap in empathetic games analysis literature, this article  presents empathy from positive, negative or neutral perspective behind the limited focus on altruism or malice. With a clear classification of the game, a better understanding of the overall structure of empathetic outcomes  as well as the learning techniques to evolutionarily stable equilibria are presented.
We show that the altruism enforces  Nash equilibrium payoff equity and improves fairness between the players. In contrast, spite  may worsen the payoff inequality gap. The results reveal that the dominated strategies of the classical bimatrix games are not necessarily dominated any more when users' psychology is involved, and a significant level of involvement can be
 observed among the decision-makers who are positively partially empathetic. Empathy can help in stabilizing to equilibria. Pure altruism is logically inconsistent, only by balancing it with some partial selfishness does it create a consistent psychology.
 \subsection{Structure}
 The rest of the paper is structured as follows. Section \ref{sec:two} presents the empathy structure in games with two players and two actions per player. Section \ref{sec:three} focuses on empathetic evolutionary game dynamics and connection to evolutionarily stable strategies.  Section \ref{sec:four} presents the impact of empathy in generic learning algorithms. Section \ref{sec:3} establishes connection between mutual support, positive empathy altruism and Berge solution concept. Section \ref{incon} examines inconsistency of empathy structure in a multi-level hierarchical reasoning.
Section \ref{sec:five} concludes the paper.
 
\section{Empathy in $2\times2$ Games}\label{sec:two}
We consider two players, each having two actions. Let $\mathcal{A}_i=\{1,2\}$ be action set and $\mathcal{A}:=\prod_{i}\mathcal{A}_i$ be action profiles space of all players. The selfish payoff of player $i$  is denoted by $r_i:\ \mathcal{A} \rightarrow \mathbb{R}.$ By slightly abusing the notation we will write $r_{i}(a)$ for the reward player $i$ gains if the joint action $a \in \mathcal{A}$ is played. Given a $2 \times 2$ matrix $\Lambda$ with entries denoting the empathy of the players and vector $r(a)=\left( \begin{array}{c}
r_{1}(a) \\ r_{2}(a) \end{array} \right)$,  a simplified structure for the empathy reward function, $r^{\Lambda}(a)$, is given by $\Lambda. r(a)$. Therefore, the rewards of the two players will be $( \lambda_{11}r_1+\lambda_{12} r_2$ and $ \lambda_{22}r_2+\lambda_{21} r_1)$ respectively. The empathetic game then, is given by:
$$G_{\Lambda}:=\{  \{1,2\},   \{1,2\}^2, r_1^{\Lambda},r_2^{\Lambda}\}.$$ Table \ref{table:Empathy} represents a generic form of an empathetic game. It is easy to see that if $\Lambda=I$ is the identity matrix, i.e. a diagonal matrix with $\lambda_{ii}=1$, one obtains the no-empathy game with a generic form as it depicted in Table \ref{table:noEmpathy}.
We are interested in the structure of equilibria of the empathetic game $G_{\Lambda}$ for all possible range of the coefficient of the matrix $\Lambda.$

\begin{table*}
\centering
\begin{tabular}{cc|c|c|c|c|l}
\cline{3-4}
& & \multicolumn{2}{ c| }{Player I} \\ \cline{3-4}
& & Left & Right \\ \cline{1-4}
\multicolumn{1}{ |c  }{\multirow{2}{*}{Player II} } &
\multicolumn{1}{ |c| }{Up} & $ (\lambda_{11}a_{11}+\lambda_{12}b_{11},\lambda_{22}b_{11}+\lambda_{21}a_{11})$ & $ (\lambda_{11}a_{12}+\lambda_{12}b_{12},\lambda_{22}b_{12}+\lambda_{21}a_{12})  $      \\ \cline{2-4}
\multicolumn{1}{ |c  }{}                        &
\multicolumn{1}{ |c| }{Down } & $(\lambda_{11}a_{21}+\lambda_{12}b_{21},\lambda_{22}b_{21}+\lambda_{21}a_{21}) $ & $(\lambda_{11}a_{22}+\lambda_{12}b_{22},\lambda_{22}b_{22}+\lambda_{21}a_{22}) $   \\
 \cline{1-4}
\end{tabular}
\caption[]{ $G_{\Lambda}:$Payoff matrix with Empathy}
\label{table:Empathy}
\end{table*}

\begin{table}[htb]
\centering
\begin{tabular}{cc|c|c|c|c|l}
\cline{3-4}
& & \multicolumn{2}{ c| }{Player I} \\ \cline{3-4}
& & Left&  Right  \\ \cline{1-4}
\multicolumn{1}{ |c  }{\multirow{2}{*}{Player II} } &
\multicolumn{1}{ |c| }{Up} & $ (a_{11},b_{11}) $& $ (a_{12},b_{12})  $      \\ \cline{2-4}
\multicolumn{1}{ |c  }{}                        &
\multicolumn{1}{ |c| }{Down } & $(a_{21},b_{21}) $ & $(a_{22},b_{22}) $   \\
 \cline{1-4}
\end{tabular}
\caption[]{ $G_{I}:$ Payoff matrix without empathy.}
\label{table:noEmpathy}
\end{table}
\subsection{Solution Concepts}
 We briefly refer to  few solution concepts. A Nash equilibrium  \cite{nash} is a situation in which no player can improve her payoff by unilateral deviation. A Pareto efficient, or Pareto optimal, is a joint action profile in which it is not possible to make any one player better off without making at least one player worse off.  An evolutionarily stable strategy is a Nash equilibrium which is resilient by small proportion of  deviants (also called mutants). 

\subsection{Classification of Generic $2\times 2$ One-Shot Games }
By suitably choosing the entries of matrix $\Lambda$ the resulting empathetic game of any $2 \times 2$ game will fall in one of the following categories, independently of the rewards' structure of the initial game. These categories include the trivial cases of constant payoff games or games with weakly dominated actions. In addition they can be classified as
coordination (such as Bach-or-Stravinski), anticoordination (such as Hawk-or-Dove), discoordination (such as matching pennies) or games with a dominant strategy (such as Prisoner's dilemma). 
Below we present some existing results for these games in terms of stable or unstable equilibria and limit cycles or oscillations which might occur. In addition vector field plots are used in order to present the evolutionary game dynamics of such games \cite{Tembine2012}.

\subsubsection{Outcomes of coordination games}
\begin{figure}[htb]
\includegraphics[scale=0.3]{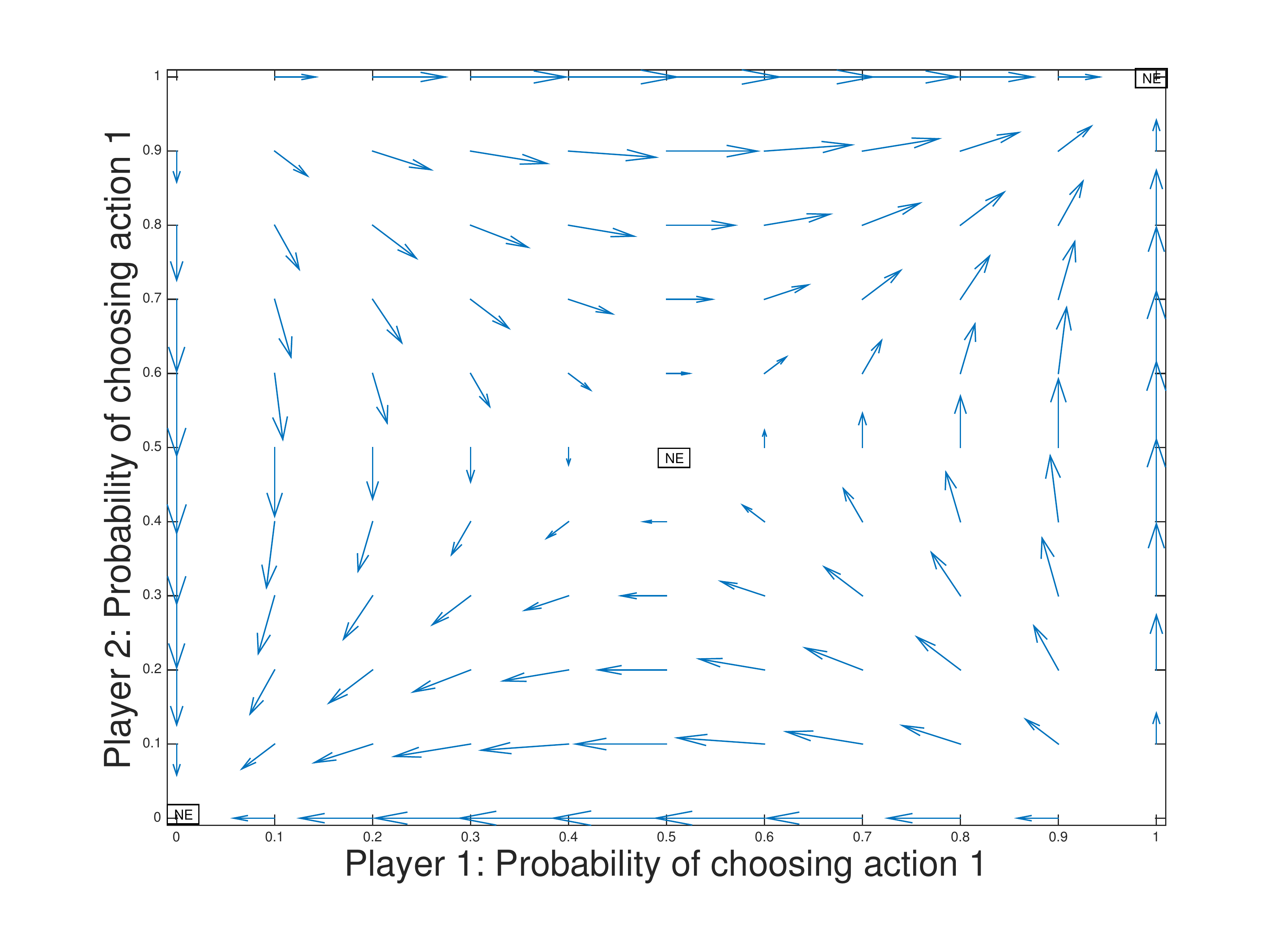}\\ \caption{Outcomes of  coordination games. The two pure Nash equilibria $(0,0)$ and $(1,1)$  are also evolutionarily stable strategies. The mixed Nash equilibrium is not an evolutionarily stable strategy.
 } \label{fig:coord}
\end{figure}
 The canonical example of this class of games is the Bach or Stravinski game or the rendez-vous game. 
 In the generic setting the following inequalities in payoffs hold for Player 1 (rows):  $ a_{11}>a_{21},  a_{22}>a_{12}$ and for Player 2 (columns): 
$ b_{11}>b_{12},  b_{22}>b_{21}.$
 In this game the strategy profiles  $\{Up, Left\}$ and $\{Down, Right\}$ are pure Nash equilibria. This game has two pure strategy Nash equilibria, one where both chose the first action and another where both chose the second action. There is also a mixed strategy Nash equilibrium.
 Figure \ref{fig:coord} depicts a typical vector field of coordination games. As it is illustrated in Figure \ref{fig:coord}, unlike the pure Nash equilibria, the mixed equilibrium is not an evolutionarily stable strategy. Additionally, the mixed Nash equilibrium is also Pareto dominated by the two pure Nash equilibria.  As illustrated in the vector  field, the two pure equilibria are stable and the the fully mixed equilibrium is unstable in the sense of Lyapunov.

\subsubsection{Outcomes of anti-coordination games}
Anti-coordination games have the same properties as coordinations if we change the names of the strategies for Player 2. 
 In this class of games it is mutually  beneficial for the players to play different strategies. An example of a anti-coordination games is the so-called  Hawk and Dove games, or snowdrift game or game of chicken. The payoffs of the players in an anti-coordination have the following properties:
$a_{21}> a_{11}, \ a_{12}> a_{22}$ and $ b_{12}>b_{11},  \ b_{21}>b_{22}$  for row-player and  column-player respectively. The pure action profiles $\{Down, Left\}$ and $\{Up, Right\}$ are the two pure Nash equilibria. There is also a unique mixed strategies Nash equilibrium. Figure \ref{fig:anticoord} plots a typical vector field of an anti-coordination game.   As illustrated in the vector  field, the two pure equilibria are stable and the the fully mixed equilibrium is unstable in the sense of Lyapunov.

\begin{figure}[htb]
\includegraphics[scale=0.3]{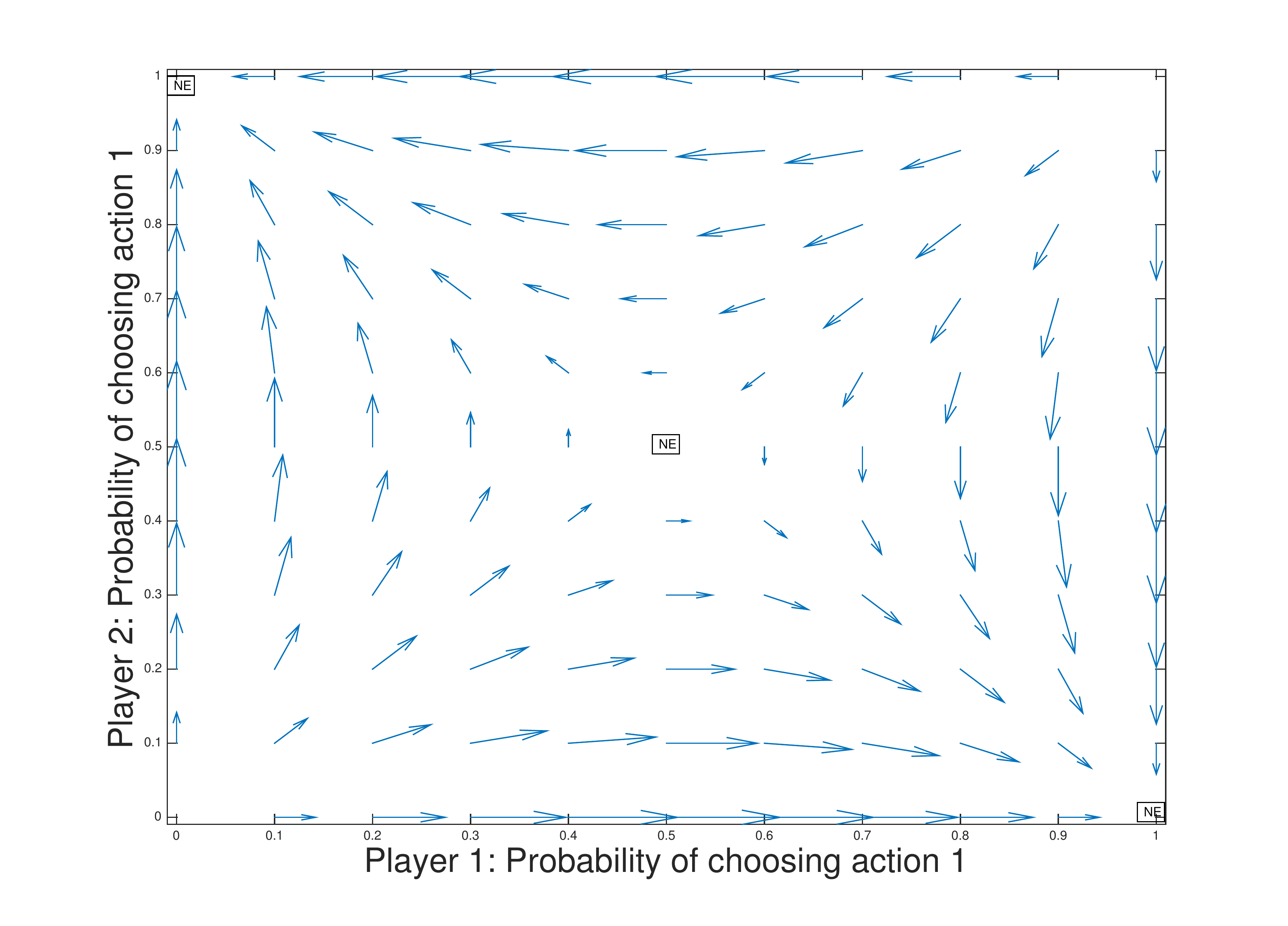}\\ \caption{Outcomes of anti-coordination games. Two pure equilibria and one fully mixed equilibrium} \label{fig:anticoord}
\end{figure}

\subsubsection{Outcomes of empathetic prisoner's dilemma games}
\begin{figure}[htb]
\includegraphics[scale=0.3]{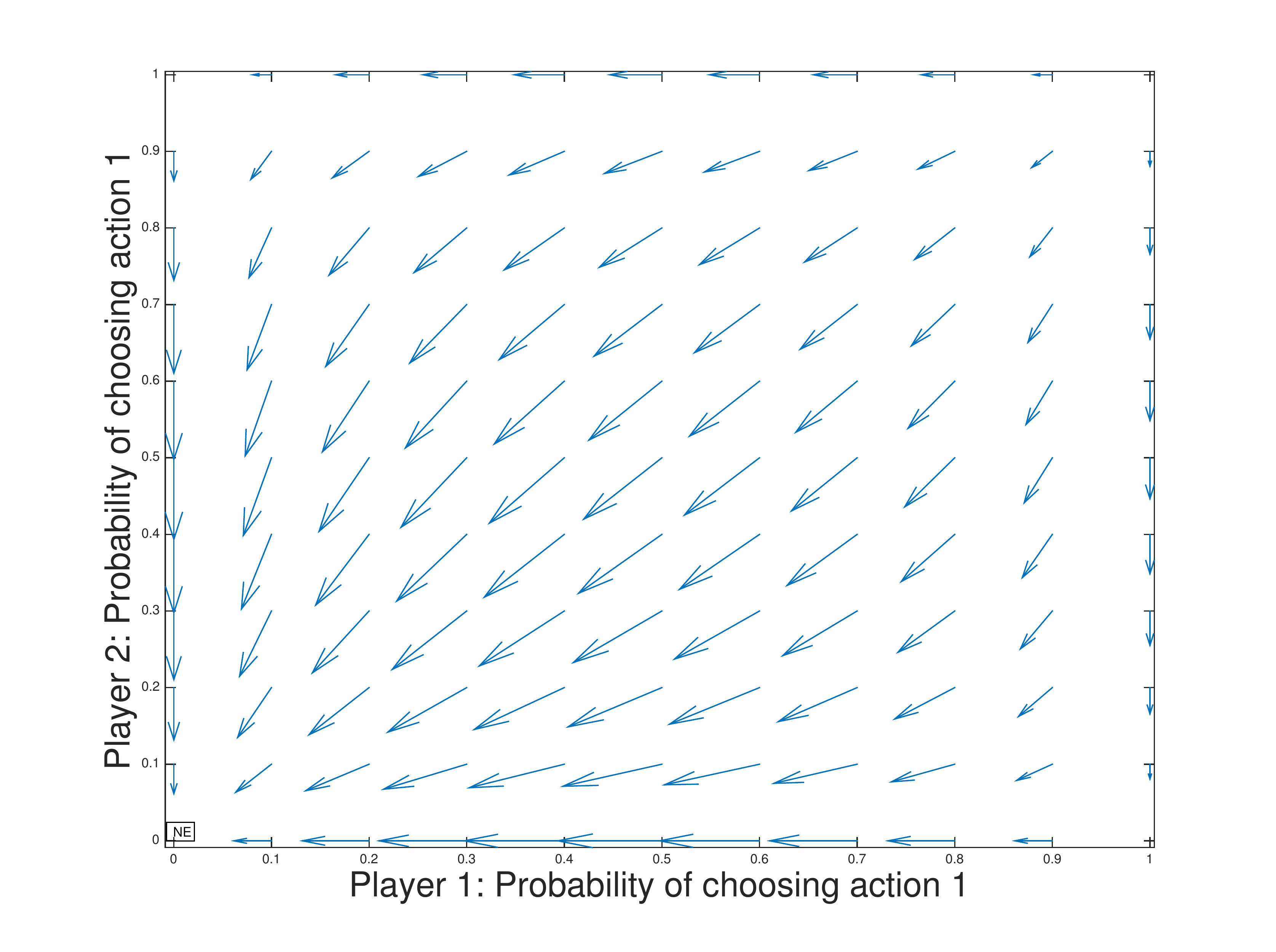}\\ \caption{Outcomes of prisoner's dilemma games. All interior trajectories converges to the unique ESS  $(0,0)$ at the corner.} \label{fig:pd}
\end{figure}

A Prisoner's Dilemma game is a $2 \times 2$ game where each player has a strictly dominant strategy, where the rewards of the players are of the form $a_{21}>a_{11}>a_{22}>a_{12}.$  The outcome where both players choose their dominated strategy strongly Pareto-dominates the outcome where both players choose their dominant strategy. This belongs to the class of games with a dominant strategy. Figure \ref{fig:pd} plots a typical vector field of prisoner's dilemma games. We observe a global convergence to the unique evolutionarily stable strategy under various evolutionary game dynamics.

\subsubsection{Outcomes of empathetic matching pennies games}

In Table \ref{table:noEmpathy}, choosing payoffs so that  $ a_{11}>a_{21}, \ a_{22}>a_{12}$ while $ b_{11}<b_{12}, \ b_{21}>b_{22},$ 
 creates a discoordination game. In each of the four possible action profiles either Player 1 or Player 2 are better off by switching their strategy, so the only Nash equilibrium is a fully mixed one. One such an example of game is the ``so-called''  matching pennies game, which  is played between two players, 1 and 2. Each player has a penny and must secretly turn the penny to heads or tails. The players then reveal their choices simultaneously. If the pennies match (both heads or both tails), then Player 1 keeps both pennies, so wins one from Player 2 (+1 for Player 1, -1 for Player 2). If the pennies do not match (one heads and one tails) Player 2 keeps both pennies, so receives one from Player 1 (-1 for Player 1, +1 for Player 2). There is no pair of pure strategies such that neither player would want to switch if told what the other would do. Instead, the unique Nash equilibrium of this game is in mixed strategies: each player chooses heads or tails with equal probability.  
 Figure \ref{fig:mp} plots a typical vector field of a typical zero-sum matching pennies game. The dynamics   need not converge even if the equilibrium point is the unique equilibrium point of the game. 
 
\begin{figure}[htb]
\includegraphics[scale=0.3]{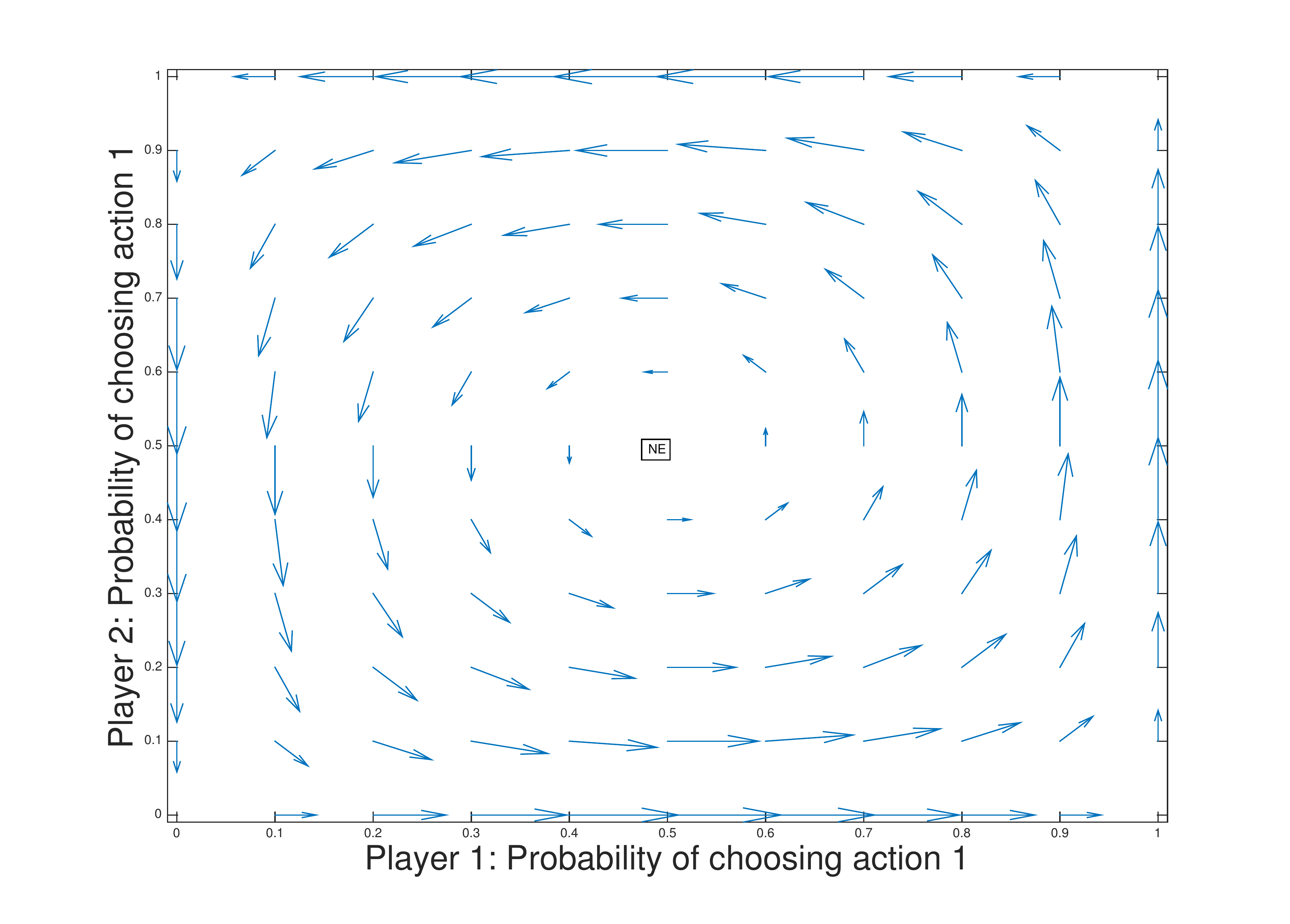}\\ \caption{Outcomes of empathetic  matching pennies games. Non-convergence to the unique mixed strategy equilibrium. Convergence to a limit cycle for $\Lambda=I.$ } \label{fig:mp}
\end{figure}

\section{Effect of empathy on one-shot game's outcome}
In this section various effects of empathy in the games which belong to the categories referred in the previous section are studied. A strategy is dominated for a player if she has another strategy that performs at least as good no matter what other players choose.

The next result shows that dominated strategies of the game without empathy can survive in the empathetic game.

\begin{proposition} A dominated strategy in $G_{I}$ is not necessarily dominated in the game  $G_{\Lambda}$ for $\Lambda\neq I.$
\end{proposition}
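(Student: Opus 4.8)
The plan is to prove this by exhibiting a single concrete example rather than arguing in full generality: pick a $2\times 2$ bimatrix game $G_I$ in which one player, say Player 1, has a strategy that is (strictly) dominated in the no-empathy game, and then choose an empathy matrix $\Lambda\neq I$ so that in $G_\Lambda$ that same strategy is no longer dominated. Because the statement only asserts ``not necessarily,'' one well-chosen instance suffices. The natural candidate is the Prisoner's Dilemma from the classification above, with rewards $a_{21}>a_{11}>a_{22}>a_{12}$ for the row player (and symmetric structure for the column player), so that the ``cooperate'' action of Player 1 is strictly dominated by ``defect'' in $G_I$.

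Concretely, I would fix numeric payoffs, e.g. the standard table with $(a_{11},b_{11})=(3,3)$, $(a_{12},b_{12})=(0,5)$, $(a_{21},b_{21})=(5,0)$, $(a_{22},b_{22})=(1,1)$, so defect strictly dominates cooperate for each player. Then take a partially altruistic $\Lambda$, for instance $\lambda_{11}=\lambda_{22}=1$ and $\lambda_{12}=\lambda_{21}=\lambda$ with $\lambda>0$ large enough (any $\lambda\ge 1$ works, and one can check $\lambda>\tfrac{1}{2}$ suffices), and compute the transformed payoff entries $\lambda_{11}a_{k\ell}+\lambda_{12}b_{k\ell}$ for Player 1 using Table~\ref{table:Empathy}. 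The key step is then the direct comparison: dominance of Player 1's first action in $G_\Lambda$ would require the transformed Player-1 payoff in row ``cooperate'' to be at least that in row ``defect'' against each column action; I would show that for the chosen $\lambda$ this inequality fails in (at least) one column and holds in the other, hence neither row dominates the other in $G_\Lambda$, so the formerly dominated strategy survives. (Equivalently, one observes that with enough mutual altruism the Prisoner's Dilemma turns into a coordination-type game with two pure equilibria, as anticipated by the classification in Section~\ref{sec:two}.)

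There is essentially no hard obstacle here — the argument is a finite computation — so the only thing to be careful about is bookkeeping: making sure the chosen numbers genuinely give strict dominance in $G_I$ (so the hypothesis is nonvacuous), and that the chosen $\Lambda$ genuinely destroys dominance in $G_\Lambda$ rather than merely weakening strict to weak dominance. I would therefore state the inequalities explicitly: in $G_I$, $a_{21}>a_{11}$ and $a_{22}>a_{12}$; in $G_\Lambda$ with $\lambda$ large, $\lambda_{11}a_{11}+\lambda_{12}b_{11} > \lambda_{11}a_{21}+\lambda_{12}b_{21}$ (the cooperate-vs-defect comparison reverses against the opponent's cooperate), while the comparison against the opponent's defect still favors defect, giving a genuine non-domination. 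A brief remark that the same device works for games with a weakly (rather than strictly) dominated action, and that one can instead use partial spite to make a non-dominated action become dominated, would round out the discussion, but is not needed for the proposition as stated.
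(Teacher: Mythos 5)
Your proposal is essentially the paper's own proof: both instantiate the prisoner's dilemma with $a_{21}>a_{11}>a_{22}>a_{12}$, take $\lambda_{11}=\lambda_{22}=1$ with positive off-diagonal empathy exceeding the threshold $\frac{a_{21}-a_{11}}{a_{11}-a_{12}}$, and show the row-wise dominance inequality reverses, so the argument is correct in substance. One bookkeeping caveat: with your numbers $(3,3),(0,5),(5,0),(1,1)$ the column-1 comparison reverses only for $\lambda>\tfrac{2}{3}$ (not $\tfrac{1}{2}$), and for $\lambda\geq 1$ \emph{both} columns reverse so cooperate becomes strictly dominant rather than merely undominated --- still enough for the proposition, but not the ``one column reverses, the other does not'' picture you describe.
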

This is a very important as it allows the survival of dominated strategies when empathy is involved.
\begin{proof}
It suffices to prove it for the empathetic prisoner's dilemma game. Let $$\Lambda=\left(\begin{array}{ll}  1 & \lambda_{12}\\  \lambda_{21}&1 \end{array}\right),$$ with $\lambda_{ij}\geq 0, i\neq j$  and 
$a_{21}>a_{11}>a_{22}>a_{12}$ and $b_{ij}=a_{ji}.$ The inequalities  $a_{21}>a_{11}$ and $a_{22}>a_{12}$ imply that it is better for Player 1 to choose action 2 independently of the other player.
Similarly, the inequalities  $b_{12}>b_{11}$ and $b_{22}>b_{21}$ imply that it is better for Player 2 to choose action 2 independently of the other player.
Then  action 1 is dominated by action 2 in the non-empathetic game. Now, we check these inequalities in the empathetic game. 
For Player 1:
$a_{11}^{\Lambda}:=\lambda_{11}a_{11}+\lambda_{12}b_{11}=a_{11}(1+\lambda_{12})$ is greater than
$a_{21}^{\Lambda}:=\lambda_{11}a_{21}+\lambda_{12}b_{21}=a_{21}+\lambda_{12}a_{12}$ as long as  $a_{11}-a_{21}+\lambda_{12}(a_{11}-a_{12})>0$ and this is possible because $a_{11}-a_{12}>0$ by assumption. It suffices to consider  $\lambda_{12}>\frac{a_{21}-a_{11}}{a_{11}-a_{12}}>0$ and then action $1$ is not dominated by action 2 of Player 1 in the empathetic game.

For Player 2: 
$b_{11}^{\Lambda}:= \lambda_{22}b_{11}+\lambda_{21}a_{11}=  (1 +\lambda_{21})a_{11},    $  and 
$b_{12}^{\Lambda}:=   \lambda_{22}b_{12}+\lambda_{21}a_{12}=   a_{21}+  \lambda_{21}a_{12}.$ It follows that  $a_{11}-a_{21}+ \lambda_{21}(a_{11}-a_{12})>0$ if $ \lambda_{21}>\frac{a_{21}-a_{11}}{a_{11}-a_{12}}.$
For $\lambda_{12}$ and $\lambda_{21}$ such that $$  \min(\lambda_{21},\lambda_{12}) >\frac{a_{21}-a_{11}}{a_{11}-a_{12}}>0,$$ the first action is not dominated any more in the empathetic game. In the resulting empathetic game the action profile $(1,1)$ is a pure Nash equilibrium that is Pareto-dominant. We conclude that action 1, which is a dominated  strategy in $G_{I}$, is not necessarily dominated in the game  $G_{\Lambda}$ for $\Lambda\neq I.$
\end{proof}

\begin{proposition}  $G_{I}$  is symmetric does not imply that  $G_{\Lambda}$  is symmetric. In particular  $G_{\Lambda}$ helps in breaking symmetry through empathy.
\end{proposition}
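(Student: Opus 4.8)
The plan is to argue directly from the definition of a symmetric bimatrix game together with the explicit form of the empathetic payoffs displayed in Table~\ref{table:Empathy}. Recall that a $2\times 2$ game $G_I$ is \emph{symmetric} when the column player's payoff matrix is the transpose of the row player's, i.e. $b_{ij}=a_{ji}$ for all $i,j\in\{1,2\}$ — this is exactly the normalization $b_{ij}=a_{ji}$ already invoked in the proof of the previous proposition. Under empathy the row player's entries become $a^{\Lambda}_{ij}=\lambda_{11}a_{ij}+\lambda_{12}b_{ij}$ and the column player's entries become $b^{\Lambda}_{ij}=\lambda_{22}b_{ij}+\lambda_{21}a_{ij}$, so $G_{\Lambda}$ is symmetric exactly when $b^{\Lambda}_{ij}=a^{\Lambda}_{ji}$ for all $i,j$.

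First I would substitute $b_{ij}=a_{ji}$ into this last identity. A one-line computation turns $b^{\Lambda}_{ij}=a^{\Lambda}_{ji}$ into the linear condition $(\lambda_{22}-\lambda_{11})\,a_{ji}+(\lambda_{21}-\lambda_{12})\,a_{ij}=0$, required for every pair $(i,j)$. Writing this out for $(i,j)\in\{(1,1),(2,2),(1,2),(2,1)\}$ and forming the sum and the difference of the two off-diagonal equations collapses the system to $(\lambda_{22}-\lambda_{11}+\lambda_{21}-\lambda_{12})(a_{12}+a_{21})=0$ and $(\lambda_{22}-\lambda_{11}-\lambda_{21}+\lambda_{12})(a_{21}-a_{12})=0$, together with the diagonal relations $(\lambda_{22}-\lambda_{11}+\lambda_{21}-\lambda_{12})\,a_{ii}=0$. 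Hence, for any symmetric base game that is generic in the mild sense $a_{21}\neq a_{12}$ and $a_{12}+a_{21}\neq 0$, symmetry of $G_{\Lambda}$ forces $\lambda_{11}=\lambda_{22}$ and $\lambda_{12}=\lambda_{21}$; conversely, when $\lambda_{11}=\lambda_{22}$ and $\lambda_{12}=\lambda_{21}$ one checks at once that $a^{\Lambda}_{ji}=b^{\Lambda}_{ij}$, so symmetry is preserved. This already proves the proposition and, in fact, pinpoints the \emph{only} empathy structures that keep a generic symmetric game symmetric.

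To exhibit the asserted symmetry-breaking concretely I would take $G_I$ to be the symmetric Prisoner's Dilemma ($b_{ij}=a_{ji}$, $a_{21}>a_{11}>a_{22}>a_{12}$) and the empathy matrix $\Lambda=\left(\begin{array}{ll} 1 & \lambda\\ 0 & 1\end{array}\right)$ with $\lambda>0$, i.e. player~1 is partially altruistic toward player~2 while player~2 is self-regarding. Then $\lambda_{11}=\lambda_{22}$ but $\lambda_{12}\neq\lambda_{21}$, and indeed $b^{\Lambda}_{21}=a_{12}$ while $a^{\Lambda}_{12}=a_{12}+\lambda a_{21}$, which differ whenever $\lambda a_{21}\neq 0$, as holds for any standard payoff normalization. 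Thus $G_{\Lambda}$ violates $b^{\Lambda}_{ij}=a^{\Lambda}_{ji}$ and is not symmetric, which is exactly the claim; the same witness $\Lambda$ breaks symmetry for coordination, anti-coordination and matching-pennies base games as well.

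The argument is short and I do not anticipate a genuine obstacle; the one point that needs care is the genericity proviso. If the base matrix $A$ itself happens to be symmetric with $a_{12}+a_{21}=0$, then a strictly larger family of $\Lambda$'s leaves $G_{\Lambda}$ symmetric, so the statement must be read (and proved) as ``not necessarily symmetric'' — there exist a symmetric $G_I$ and $\Lambda\neq I$ with $G_{\Lambda}$ asymmetric — rather than ``always asymmetric''. With that caveat the mechanism is transparent: an imbalance in the cross-weights, $\lambda_{12}\neq\lambda_{21}$ (asymmetric altruism or spite), or in the self-weights, $\lambda_{11}\neq\lambda_{22}$, injects into the payoff bimatrix an asymmetry that was absent from $G_I$.
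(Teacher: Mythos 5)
Your proposal is correct and follows essentially the same route as the paper: both substitute $b_{ij}=a_{ji}$ into the empathetic entries and observe that $b_{12}^{\Lambda}=\lambda_{22}a_{21}+\lambda_{21}a_{12}$ need not equal $a_{21}^{\Lambda}=\lambda_{11}a_{21}+\lambda_{12}a_{12}$ once $\lambda_{12}\neq\lambda_{21}$ (or $\lambda_{11}\neq\lambda_{22}$). Your version is in fact tighter than the paper's, which stops at ``may be different'': you pin down exactly which $\Lambda$ preserve symmetry for a generic base game and exhibit an explicit witness, correctly reading the claim as ``not necessarily symmetric.''
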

\begin{proof} 
The game  $G_{I}$  is said symmetric if the matrix $A$ and $B$ are square matrices and $B$ is the transpose of $A.$ In empathetic games different players may have different empathy structure $i\neq j,\ \lambda_{ij}\neq \lambda_{ji}, $ and the matrix entries $ b_{12}^{\Lambda}=\lambda_{22}b_{12}+\lambda_{21}a_{12}$ may be different than $a_{21}^{\Lambda}:=\lambda_{11}a_{21}+\lambda_{12}b_{21}$ even if $b_{12}=a_{21}, \ b_{21}=a_{12}.$ Hence the empathy structure can break the symmetry in the game.
\end{proof}

\begin{proposition}  Altruism and self-confirming  can help in selecting social welfare.
\end{proposition}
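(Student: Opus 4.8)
The plan is to proceed as in the previous propositions: exhibit a concrete subclass of $2\times 2$ games — here the coordination class of Section~\ref{sec:two}, in its Pareto-ranked (stag-hunt type) form — in which the non-empathetic game $G_I$ has two evolutionarily stable pure equilibria of unequal social value, so that the dynamics need not pick the good one, and then choose an altruistic $\Lambda$ that collapses $G_\Lambda$ onto the social-welfare function, forcing the empathetic dynamics of Section~\ref{sec:three} to select the welfare-optimal profile.

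First I would fix a coordination game, $a_{11}>a_{21}$, $a_{22}>a_{12}$ for Player~1 and $b_{11}>b_{12}$, $b_{22}>b_{21}$ for Player~2, with (after relabelling if needed) $a_{11}+b_{11}>a_{22}+b_{22}$, so that $(\mathrm{Up},\mathrm{Left})$ strictly maximizes the social welfare $W:=r_1+r_2$ over the two pure equilibria; by the coordination-games subsection both corners are evolutionarily stable in $G_I$, and no selection takes place. Then I would take the altruistic, self-confirming weighting $\lambda_{11}=\lambda_{12}=c_1>0$ and $\lambda_{22}=\lambda_{21}=c_2>0$ — each player retains a strictly positive stake in her own payoff (the ``self-confirming'' requirement) and assigns an equal positive weight to the partner. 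By the definition of $r_i^{\Lambda}$ this yields $r_1^{\Lambda}=c_1 W$ and $r_2^{\Lambda}=c_2 W$: up to positive rescalings, both players now optimize the \emph{common} objective $W$. Hence $G_\Lambda$ is a common-interest (potential) game with potential $W$; its pure Nash equilibria are exactly the coordinate-wise maximizers of $W$, the global maximizer $(\mathrm{Up},\mathrm{Left})$ is one of them, and in the identical-interest game $G_\Lambda$ it is payoff-dominant and hence also risk-dominant.

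Next I would invoke the potential-game / ESS apparatus of Section~\ref{sec:three}: for the revision protocols treated there (replicator, Brown-von Neumann-Nash, Smith, imitation, hybrid), $W$ is a strict Lyapunov function along interior trajectories of $G_\Lambda$, so the global welfare maximizer carries the dominant basin of attraction and is the stochastically stable, hence selected, outcome, whereas the welfare-inferior corner either ceases to be a strict equilibrium or has a vanishing basin. This gives the claim: altruism together with retained self-regard selects the social-welfare outcome. The degenerate subcase where $W$ takes equal values at the two corners is harmless — any selection is then welfare-optimal — and if a strict tie-break is wanted one perturbs $c_1,c_2$ slightly.

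The step I expect to be the real obstacle is making ``selecting'' rigorous: because $G_\Lambda$ may still retain the second corner as a Nash equilibrium, the conclusion cannot be phrased as ``the inferior equilibrium disappears'' but must be a dynamic/stochastic-stability statement, and the cleanest way to earn it is exactly the identical-interest potential-game reduction above, which lets the Section~\ref{sec:three} results supply the basin-of-attraction/selection conclusion rather than forcing a direct computation. A secondary point to handle with care is the sign of $\lambda_{ii}$: with $\lambda_{ii}<0$ the transformed objective becomes $-r_i+r_j$ and the welfare reading of the reduction breaks, which is precisely why the ``self-confirming'' hypothesis $\lambda_{ii}>0$ cannot be dropped.
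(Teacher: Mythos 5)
Your argument takes a genuinely different route from the paper's. The paper proves this proposition as a one-line corollary of its Proposition~1: in the symmetric prisoner's dilemma with $a_{21}>a_{11}>a_{22}>a_{12}$, $b_{ij}=a_{ji}$ and $\min(\lambda_{12},\lambda_{21})>\frac{a_{21}-a_{11}}{a_{11}-a_{12}}>0$, the welfare-maximizing profile $(1,1)$, which is not even an equilibrium of $G_I$, becomes a Pareto-dominant Nash equilibrium of $G_\Lambda$. Since the claim is only that altruism \emph{can help}, exhibiting that instance suffices. Your alternative testbed, a Pareto-ranked coordination game with the weights $\lambda_{11}=\lambda_{12}=c_1>0$, $\lambda_{21}=\lambda_{22}=c_2>0$, is legitimate, and the reduction $r_1^{\Lambda}=c_1W$, $r_2^{\Lambda}=c_2W$ with $W=r_1+r_2$ is correct: $G_\Lambda$ becomes a common-interest game whose global welfare maximizer is a payoff-dominant (and, as you note, automatically risk-dominant) equilibrium.

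The gap is in your final ``selection'' step. In the common-interest game the welfare-inferior corner generically remains a strict coordinate-wise local maximizer of $W$, hence a strict Nash equilibrium, hence an ESS that is locally asymptotically stable under the replicator, BNN and Smith dynamics of Sections~\ref{sec:three}--\ref{sec:four}. The Lyapunov property of $W$ under those deterministic dynamics yields only convergence to the set of local maximizers from interior initial conditions; it does not make the inferior basin vanish, and nothing in the paper supplies the stochastic-stability (vanishing-noise / log-linear) machinery under which the potential maximizer is uniquely selected -- you would have to import and prove that separately. To close the argument you should either (i) switch to the paper's route, where for sufficiently large mutual altruism the welfare profile of the prisoner's dilemma becomes the unique (indeed dominant-strategy) equilibrium of $G_\Lambda$, or (ii) weaken your conclusion to what the identical-interest reduction actually delivers: the welfare maximizer is the payoff- and risk-dominant equilibrium of $G_\Lambda$ and, in the symmetric single-population case, carries the strictly larger basin of attraction because the interior rest point $\frac{\beta_2}{\beta_1+\beta_2}$ moves below $\frac12$.
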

This result is very important in the sense that it allows the possibility for the social welfare to be selected by means of design of the empathy structure. Note, however, that for some other empathy structure the outcome may strictly worsen the social welfare.
\begin{proof} Consider again the symmetric prisoner's dilemma satisfying 
$a_{21}>a_{11}>a_{22}>a_{12}$ and $b_{ij}=a_{ji}.$ The empathetic version of the game selects the social welfare action profile $(1,1)$ for $$  \min(\lambda_{21},\lambda_{12}) >\frac{a_{21}-a_{11}}{a_{11}-a_{12}}>0.$$
\end{proof}

\begin{proposition}Altruism can help in reducing payoff-inequality. Spite can worsen payoff equity.
\end{proposition}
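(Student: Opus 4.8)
The plan is to work again inside the symmetric Prisoner's-Dilemma family used in Propositions~1 and 3, since that is the running example where empathy visibly reshapes the equilibrium, and to compare the payoff gap $|r_1^{\Lambda}-r_2^{\Lambda}|$ at the relevant equilibrium with the gap $|r_1-r_2|$ in $G_I$. Concretely, take $b_{ij}=a_{ji}$ and $a_{21}>a_{11}>a_{22}>a_{12}$, and put $\Lambda=\left(\begin{smallmatrix}1&\lambda_{12}\\\lambda_{21}&1\end{smallmatrix}\right)$. In the non-empathetic game the (unique, dominant-strategy) equilibrium is $(2,2)$ with outcome $(a_{22},b_{22})=(a_{22},a_{22})$, so there the gap is already zero and nothing is to be gained; the right comparison is at an \emph{asymmetric} action profile, e.g. $(1,2)$ or $(2,1)$, where the raw gap is $|a_{12}-a_{21}|=a_{21}-a_{12}>0$. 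First I would write the empathetic payoffs at such a profile, say at $(1,2)$: $r_1^{\Lambda}=a_{12}+\lambda_{12}a_{12}$? — more carefully, $r_1^{\Lambda}(1,2)=\lambda_{11}a_{12}+\lambda_{12}b_{12}$ and $r_2^{\Lambda}(1,2)=\lambda_{22}b_{12}+\lambda_{21}a_{12}$, and with $b_{12}=a_{21}$ this becomes $r_1^{\Lambda}=a_{12}+\lambda_{12}a_{21}$, $r_2^{\Lambda}=a_{21}+\lambda_{21}a_{12}$.

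Then I would compute the empathetic gap $r_2^{\Lambda}-r_1^{\Lambda}=(a_{21}-a_{12})-(\lambda_{12}a_{21}-\lambda_{21}a_{12})$ and compare its absolute value to $a_{21}-a_{12}$. The key algebraic step is to show that in the symmetric-altruism case $\lambda_{12}=\lambda_{21}=\lambda>0$ the gap becomes $(1-\lambda)(a_{21}-a_{12})$, whose absolute value is strictly smaller than $a_{21}-a_{12}$ for every $\lambda\in(0,2)$, and in particular is exactly $0$ when $\lambda=1$ (full mutual altruism balanced with full selfishness): altruism shrinks, and can annihilate, the payoff inequality. For the spite half of the statement I would take $\lambda_{12}=\lambda_{21}=\lambda<0$; then the gap is $(1-\lambda)(a_{21}-a_{12})$ with $1-\lambda>1$, so $|{\rm gap}|>a_{21}-a_{12}$, i.e. spite strictly widens the inequality. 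I would also remark that the same computation applies verbatim at the profile $(2,1)$ and, after the simplifications of Proposition~1, at the Pareto-dominant equilibrium $(1,1)$ that altruism creates, where $r_1^{\Lambda}=r_2^{\Lambda}=(1+\lambda)a_{11}$ so the gap is identically zero for any symmetric $\Lambda$ — this is the cleanest witness for the ``reduces inequality'' claim, since it exhibits a genuine Nash equilibrium of $G_\Lambda$ with perfect equity.

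To make the statement robust rather than example-bound, I would phrase the conclusion as: there exist games and empathy structures (indeed, the whole symmetric PD family with symmetric positive $\Lambda$) for which every outcome — and in particular the selected equilibrium — has strictly smaller payoff disparity than in $G_I$, while for symmetric \emph{negative} $\Lambda$ the disparity at the asymmetric profiles strictly increases; and if one wants the monotone version, note $\partial_\lambda|{\rm gap}| = -\,\mathrm{sign}(1-\lambda)\,(a_{21}-a_{12})<0$ for $\lambda<1$, so increasing altruism (up to the balancing point) monotonically decreases inequality. The main obstacle I anticipate is not the algebra but the bookkeeping of \emph{which} equilibrium to evaluate the gap at once $\Lambda\neq I$: the equilibrium set of $G_\Lambda$ changes with $\Lambda$ (Proposition~1 shows $(1,1)$ can become Nash), so I must be careful to compare ``like with like'' — either fix an action profile and compare payoff vectors pointwise (easy, and enough for an existence statement), or track the equilibrium correspondence $\Lambda\mapsto \mathrm{NE}(G_\Lambda)$ and argue the gap along it, which is the more informative but more delicate route. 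I would present the pointwise comparison as the main proof and add the equilibrium-tracking observation as a remark.
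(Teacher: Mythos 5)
Your proposal is correct and rests on exactly the same key identity as the paper: for a symmetric empathy matrix with unit diagonal and off-diagonal entry $\mu$, the pointwise payoff gap is $r_1^{\Lambda}-r_2^{\Lambda}=(1-\mu)(r_1-r_2)$, so $|1-\mu|<1$ shrinks inequality and $\mu<0$ widens it. The paper states this in two lines for arbitrary payoffs $(r_1,r_2)$ without anchoring to the prisoner's dilemma or to any equilibrium; your extra scaffolding (the PD instantiation and the equilibrium-tracking remark) is sound but not needed for the claim as stated.
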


\begin{proof}   The payoff gap is $$r_{1}^{\Lambda}-r_{2}^{\Lambda}=
\lambda_{11}r_1+\lambda_{12} r_2 - \lambda_{22}r_2-\lambda_{21} r_1$$  $$=  (\lambda_{11}-\lambda_{21})r_1+(\lambda_{12}- \lambda_{22})r_2.$$
For $\Lambda=\left(\begin{array}{ll}  1 & \mu\\  \mu&1 \end{array}\right),$  the payoff gap yields $r_{1}^{\Lambda}-r_{2}^{\Lambda}=(1-\mu) (r_1-r_2).$ This proves both announced results depending on the magnitude of $\mu.$ 
\end{proof}

Let $\Lambda=\left(\begin{array}{ll}  1 & \lambda_{12}\\  \lambda_{21}&1 \end{array}\right)$, $\tilde{\lambda}=\frac{\lambda_{12}}{\lambda_{21}}$ and $a,b$ denote the rewards of Player 1 and 2 respectively for a specific joint action. Then the following proposition holds:
 
\begin{proposition}
\begin{itemize}
\item If $a>b$ and $b>0$ then payoff inequality is reduced if $\tilde{\lambda}<\frac{a}{b}$ and payoff inequality is increased if $\tilde{\lambda}>\frac{a}{b}$ 
\item If $a>b$ and $b<0$ then payoff inequality is reduced if $\tilde{\lambda}>\frac{a}{b}$ and payoff inequality is increased if $\tilde{\lambda}<\frac{a}{b}$  
\item If $a<b$ and $b>0$ then payoff inequality is reduced if $\tilde{\lambda}<\frac{b}{a}$ and payoff inequality is increased if $\tilde{\lambda}>\frac{b}{a}$ 
\item If $a>b$ and $b<0$ then payoff inequality is reduced if $\tilde{\lambda}>\frac{b}{a}$ and payoff inequality is increased if $\tilde{\lambda}<\frac{b}{a}$  
\end{itemize}
\end{proposition}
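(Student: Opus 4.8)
The plan is to reduce the statement to one scalar computation on the empathetic payoff gap and then do the sign bookkeeping. Fix the joint action in question and, using $\lambda_{11}=\lambda_{22}=1$, write the empathetic payoffs as $r_1^{\Lambda}=a+\lambda_{12}b$ and $r_2^{\Lambda}=b+\lambda_{21}a$, so that the empathetic payoff gap is $\Delta^{\Lambda}:=r_1^{\Lambda}-r_2^{\Lambda}=(a-b)+(\lambda_{12}b-\lambda_{21}a)$, while the non-empathetic gap is $\Delta:=a-b$. The key identity is then
\[
\Delta^{\Lambda}-\Delta=\lambda_{12}b-\lambda_{21}a=\lambda_{21}\bigl(\tilde\lambda b-a\bigr),
\]
where we used $\lambda_{12}=\tilde\lambda\,\lambda_{21}$. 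Because a strictly positive rescaling of $\Lambda$ leaves all signs unchanged, the direction in which empathy moves the gap --- toward $0$ (less inequality) or away from $0$ (more inequality) --- is controlled entirely by the sign of $\tilde\lambda b-a$ relative to the sign of $\Delta=a-b$. This is precisely why only the ratio $\tilde\lambda$ appears in the statement.

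Next I would convert this into the four bulleted conditions. Payoff inequality is reduced exactly when the correction $\Delta^{\Lambda}-\Delta$ has the opposite sign to $\Delta$ (it pushes the gap toward zero) and increased when it has the same sign. Since $\mathrm{sign}(\Delta^{\Lambda}-\Delta)=\mathrm{sign}(\tilde\lambda b-a)$ for $\lambda_{21}>0$, one solves the linear inequality $\tilde\lambda b-a\gtrless 0$ for $\tilde\lambda$: for $b>0$ this reads $\tilde\lambda\gtrless a/b$, and for $b<0$ the division by $b$ reverses the inequality. Splitting on whether $a>b$ or $a<b$ --- i.e. on $\mathrm{sign}(\Delta)$ --- then gives the four cases; the $a<b$ branch is cleanest to obtain by relabelling the two players, which exchanges $a\leftrightarrow b$ and sends $\tilde\lambda\mapsto 1/\tilde\lambda$, and then reusing the $a>b$ computation, which is what produces the second threshold form $b/a$. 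Along the way one also records the degenerate boundary $\tilde\lambda b-a=0$, where $\Delta^{\Lambda}=\Delta$ and the gap is exactly unchanged.

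The point that needs care is the difference between ``the gap is pushed toward $0$'' and ``$|\Delta^{\Lambda}|$ strictly decreases'': if the empathy weights are large, the correction $\lambda_{21}(\tilde\lambda b-a)$ can overshoot, flip the sign of the gap, and ultimately enlarge $|\Delta^{\Lambda}|$ again, so the clean dichotomy is really about the direction of the effect. I would handle this either by stating the result for moderate weights, where the sign condition above is equivalent to $|\Delta^{\Lambda}|<|\Delta|$, or by reading ``reduced'' as ``can be driven to $0$ by a suitable positive scaling of $\Lambda$'', which holds iff $\tilde\lambda$ lies on the prescribed side of the threshold and makes the dependence on $\tilde\lambda$ alone literally exact. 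Everything else --- dividing by $b$, tracking that one sign flip, and the relabelling symmetry --- is routine once the identity $\Delta^{\Lambda}-\Delta=\lambda_{21}(\tilde\lambda b-a)$ is in place.
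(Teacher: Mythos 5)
Your proposal is correct and follows essentially the same route as the paper's proof: compare the empathetic gap $a+\lambda_{12}b-b-\lambda_{21}a$ to the original gap $a-b$, reduce to the sign of $\lambda_{12}b-\lambda_{21}a$, and divide through by $\lambda_{21}$ and $b$ with the appropriate sign flips. You are in fact more careful than the paper on two points it leaves implicit --- that $\lambda_{21}>0$ is needed for the division defining $\tilde\lambda$ to preserve the inequality, and that ``reduced'' means the signed gap moves toward zero rather than that $|\Delta^{\Lambda}|$ strictly shrinks (overshoot) --- but the underlying argument is the same.
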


\begin{proof}   
We will show the proof for the first statement since the rest can be derived using an identical process. The difference between the rewards of the empathetic game and the strategic form game are:  
$$a+\lambda_{12}b-b-\lambda_{21}a<a-b\Leftrightarrow \lambda_{12}b<\lambda_{21}a\Leftrightarrow \tilde{\lambda}<\frac{a}{b}$$. 
Which concludes the proof. 
\end{proof}

\subsection{Empathetic Evolutionarily Stable Strategies} \label{sec:three}
Consider a large population of players. Each player has finite set of actions. Denote by $r_a(m):=r(a,m)$ the payoff function of a generic player with action $a$ when facing a population distribution of actions as $m.$ In the context of pairwise interaction, this is re-interpreted as acting with another player with strategy $m.$ When Player 1 plays a mixed strategy $x,$
the expected payoff of Player 1 is $\langle x, r(m)\rangle$ where $r(m)=(r_a(m))_{a}.$ In the presence of empathy, the payoff is $r^{\Lambda}_a(m).$

 \subsubsection{Homogeneous population}

We consider an homogeneous population of players. That is, $\lambda_{ii}=\sigma$ for all $i$  and $\lambda_{ij}=\mu$ for all $i\neq j.$  The empathy structure is $ \left(\begin{array}{ll}  \sigma & \mu\\  \mu &\sigma \end{array}\right).$ At each time step, two players are randomly selected for a   $2\times 2$ game.  The empathetic payoffs of Player 1 is given by 
$$A^{\Lambda}=\left(\begin{array}{cc} (\lambda_{11}+\lambda_{12})a_{11}& \lambda_{11}a_{12}+\lambda_{12}a_{21} \\ 
\lambda_{11}a_{21}+\lambda_{12}a_{12} & 
(\lambda_{11}+\lambda_{12})a_{22}  \end{array}\right)$$
$$
=
\left(\begin{array}{cc} (\sigma+\mu)a_{11}& \sigma a_{12}+\mu a_{21} \\ 
\sigma a_{21}+\mu a_{12} & (\sigma+\mu) a_{22}  \end{array}\right),
$$
$$
=:\left(\begin{array}{cc} a_{11}^{\Lambda}&  a_{12}^{\Lambda} \\ 
 a_{21}^{\Lambda} & a_{22}^{\Lambda}  \end{array}\right),
$$ and the payoff of Player 2 is  the transpose of the payoff of Player 1.There is a constraint for each player $C m\leq V$ where $C=(c_1, c_2), \ V\in\mathbb{R}, (m_1,m_2)\in\Delta_1,$ $\Delta_1=\{(y,1-y)\ |\ 0\leq y\leq 1 \}.$  Denote by $$\mathcal{C}:=\{ y \ |\ C\left(\begin{array}{cc} y \\ 1-y  \end{array}\right)\leq V\}.$$ The set of constrained best responses to any opponent strategy $m\in\mathcal{C}$ is $$CBR(m)=\arg\max_{x\in\mathcal{C}}\ (x,1-x)A \left(\begin{array}{cc} m \\ 1-m  \end{array}\right).$$
If $c_1=c_2$ the constraint is independent of the strategies. Therefore, we assume that $c_1\neq c_2.$
We assume that $\mathcal{C}$ is non-empty. Denote $\alpha=\frac{V-c_2}{c_1-c_2}.$ 
Denote by $\beta_1=(\lambda_{11}+\lambda_{12})a_{11}-(\lambda_{11}a_{21}+\lambda_{12}a_{12}),\beta_2= (\lambda_{11}+\lambda_{12})a_{22}- (\lambda_{11}a_{12}+\lambda_{12}a_{21}).$
We transform the matrix $A^{\Lambda}$ to the following matrix: $$\bar{A}^{\Lambda}=\left(\begin{array}{cc}\beta_1& 0 \\ 0 & 
\beta_2  \end{array}\right).$$ The next result shows that the two matrix games $\bar{A}^{\Lambda}$ and $ A^{\Lambda}$ have the same Nash equilibrium properties. 

\begin{proposition}
The two matrix games  $$ A^{\Lambda}=\left(\begin{array}{cc} a_{11}^{\Lambda}& a_{12}^{\Lambda}\\  a_{21}^{\Lambda}& a_{22}^{\Lambda}
  \end{array}\right)\ 
\mbox{
and}\  \bar{A}^{\Lambda}=\left(\begin{array}{cc} a_{11}^{\Lambda}-a_{21}^{\Lambda}& 0\\  0& a_{22}^{\Lambda}-a_{12}^{\Lambda},
  \end{array}\right),
$$ 
have the same Nash equilibrium properties in symmetric strategies.
\end{proposition}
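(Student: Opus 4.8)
The plan is to show that replacing $A^{\Lambda}$ by $\bar{A}^{\Lambda}$ amounts to subtracting a constant from each column of the payoff matrix, and that such a transformation leaves the symmetric best-response correspondence — and hence the set of symmetric Nash equilibria — unchanged. Concretely, write $A^{\Lambda} = \begin{pmatrix} a_{11}^{\Lambda} & a_{12}^{\Lambda} \\ a_{21}^{\Lambda} & a_{22}^{\Lambda} \end{pmatrix}$ and observe that $\bar{A}^{\Lambda}$ is obtained by subtracting $a_{21}^{\Lambda}$ from the first column and $a_{12}^{\Lambda}$ from the second column, i.e. $\bar{A}^{\Lambda} = A^{\Lambda} - \mathbf{1}\,c^{\top}$ where $\mathbf{1} = (1,1)^{\top}$ and $c = (a_{21}^{\Lambda}, a_{12}^{\Lambda})^{\top}$.

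First I would fix an opponent strategy $m = (m, 1-m)^{\top} \in \Delta_1$ (or $m \in \mathcal{C}$ in the constrained version) and compute, for an arbitrary own strategy $x = (x, 1-x)^{\top}$, the payoff difference
\[
  x^{\top} A^{\Lambda} m - x^{\top} \bar{A}^{\Lambda} m = x^{\top}\mathbf{1}\,c^{\top} m = (c^{\top} m),
\]
using $x^{\top}\mathbf{1} = 1$. The key point is that this quantity does not depend on $x$: it is a constant shift of the payoff, the same for every pure or mixed own-strategy. Consequently $\arg\max_{x} x^{\top} A^{\Lambda} m = \arg\max_{x} x^{\top}\bar{A}^{\Lambda} m$, so $CBR$ computed with $A^{\Lambda}$ coincides with $CBR$ computed with $\bar{A}^{\Lambda}$ for every $m$. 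Then I would invoke the definition of a symmetric Nash equilibrium as a fixed point $m \in CBR(m)$: since the correspondences agree, the fixed-point sets agree, which is exactly the claimed equality of Nash equilibrium properties in symmetric strategies. I would also remark that stability notions (evolutionarily stable strategies) are preserved, since the ESS conditions only involve payoff \emph{differences} $\langle x - m, A^{\Lambda} m\rangle$ and $\langle x - m, A^{\Lambda} x\rangle$, on which the column shift cancels identically.

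A small bookkeeping step is to double-check the diagonal entries: $a_{11}^{\Lambda} - a_{21}^{\Lambda} = \beta_1$ and $a_{22}^{\Lambda} - a_{12}^{\Lambda} = \beta_2$ with the substitutions $\lambda_{11} = \sigma$, $\lambda_{12} = \mu$, which matches the $\beta_i$ defined just above the statement; this is routine. The main (and essentially only) obstacle is conceptual rather than computational: one must be careful that the reduction is valid \emph{only} for symmetric strategy profiles, because the column-subtraction trick relies on both players using the same matrix and on the equilibrium condition being evaluated on the diagonal of the strategy simplex — for general (asymmetric) bimatrix equilibria one would need to argue separately about each player's matrix. Stating the result for symmetric strategies, as the proposition does, sidesteps this, so the proof is short once the constant-shift observation is made precise.
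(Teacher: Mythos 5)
Your proof is correct, and it takes a genuinely different route from the paper's. The paper proves the proposition by direct enumeration: it writes down the condition for each candidate symmetric equilibrium of $A^{\Lambda}$ (the pure strategy $x=1$ requires $a_{11}^{\Lambda}\geq a_{21}^{\Lambda}$, the pure strategy $x=0$ requires $a_{22}^{\Lambda}\geq a_{12}^{\Lambda}$, and the interior equilibrium comes from the indifference condition $\beta_1 x=\beta_2(1-x)$), observes that every one of these conditions depends only on $\beta_1$ and $\beta_2$, and concludes that the equilibrium structure coincides with that of $\mathrm{diag}(\beta_1,\beta_2)$. You instead identify the transformation $A^{\Lambda}\mapsto \bar{A}^{\Lambda}=A^{\Lambda}-\mathbf{1}c^{\top}$ with $c=(a_{21}^{\Lambda},a_{12}^{\Lambda})^{\top}$ as a column shift, and show that for any fixed opponent strategy $m$ the payoff change $x^{\top}\mathbf{1}\,c^{\top}m=c^{\top}m$ is independent of $x$, so the (constrained) best-response correspondence, and hence the set of fixed points $m\in CBR(m)$, is unchanged. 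Your argument is more structural and buys more: it proves invariance of the entire best-response correspondence rather than just the list of equilibria, it handles the constrained set $\mathcal{C}$ without any extra work, and your observation that $(x-m)^{\top}\mathbf{1}=0$ kills the shift in the ESS inequalities extends the conclusion to evolutionary stability, which the paper's case check does not address. The paper's enumeration, on the other hand, delivers the explicit equilibrium candidates ($x=1$, $x=0$, $x=\beta_2/(\beta_1+\beta_2)$) that are reused in the subsequent Type I and Type II analysis, so it is doing double duty. Your closing caveat about asymmetric profiles is harmless but slightly overcautious: adding a constant to a column of one player's payoff matrix preserves all Nash equilibria of a bimatrix game, not only the symmetric ones; the restriction to symmetric strategies in the statement reflects the single-population setting rather than a limitation of the column-shift argument.
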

\begin{proof}
We compute the set of the possible Nash equilibria of $A^{\Lambda}$ in symmetric strategies. 
Let the population profile be $(x,1-x)$ with $0\leq x\leq 1.$ The pure strategy $x=1$ is an equilibrium if $a_{11}^{\Lambda}\geq a_{21}^{\Lambda}$ i.e.,  $\beta_1:=a_{11}^{\Lambda}-a_{21}^{\Lambda}\geq 0.$
The pure strategy $x=0$ is an equilibrium if $a_{22}^{\Lambda}\geq a_{12}^{\Lambda}$ i.e.,  $\beta_2:=a_{22}^{\Lambda}-a_{12}^{\Lambda}\geq 0.$
An interior equilibrium (whenever it exists) is obtained if the indifference condition is fulfilled. 
\begin{eqnarray}
a_{11}^{\Lambda}x+ a_{12}^{\Lambda}(1-x)= a_{21}^{\Lambda}x+ a_{22}^{\Lambda} (1-x),\\
(a_{11}^{\Lambda}-a_{21}^{\Lambda})x=  (a_{22}^{\Lambda}-a_{12}^{\Lambda}) (1-x),\\
\beta_1 x=\beta_2(1-x)\\
x=\frac{\beta_2}{\beta_1+\beta_2}.
\end{eqnarray}
It turns out that the symmetric equilibria are all obtained by comparing $\beta_1\geq 0$, $\beta_2\geq 0$ or $(\frac{\beta_2}{\beta_1+\beta_2},\frac{\beta_1}{\beta_1+\beta_2}).$ Thus, it has the same equilibrium structure as in  the diagonal matrix  $ \bar{A}^{\Lambda}=\left(\begin{array}{cc} \beta_1& 0\\  0& \beta_2
  \end{array}\right).
$  This completes the proof.
\end{proof}
Notice that the matrix $A$ may not be symmetric.
 Note that if both $\beta_1$ and $\beta_2$ are zero, the transformed payoffs are constant (degenerate case) and hence, any strategy in $\mathcal{C}$ is an equilibrium. None of these equilibria is resilient by small proportion of deviants. Thus, there is no constrained ESS in this case. 
 
 \begin{proposition}
 Any generic empathetic $2\times 2$ matrix game (with non-trivial  payoffs) has at least one constrained ESS. 
 \end{proposition}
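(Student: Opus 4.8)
The plan is to reduce everything to the diagonal game $\bar{A}^{\Lambda}=\mathrm{diag}(\beta_1,\beta_2)$ furnished by the preceding Proposition and then to exhibit a constrained ESS by a short case analysis. First I would note that $\bar{A}^{\Lambda}$ is obtained from $A^{\Lambda}$ by subtracting $a_{21}^{\Lambda}$ from the first column and $a_{12}^{\Lambda}$ from the second, and that such a column shift changes $\langle z,A^{\Lambda}m\rangle$ by a quantity depending only on $m$; hence it leaves both defining ESS inequalities — namely $\langle x,\cdot\,x\rangle\ge\langle z,\cdot\,x\rangle$ for all $z\in\mathcal C$, and $\langle x,\cdot\,y\rangle>\langle y,\cdot\,y\rangle$ for alternative constrained best replies $y$ — literally unchanged. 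So $A^{\Lambda}$ and $\bar{A}^{\Lambda}$ share exactly the same constrained Nash equilibria and constrained ESS, and it suffices to work with $\bar{A}^{\Lambda}$. I read the hypothesis of non-trivial payoffs as $(\beta_1,\beta_2)\ne(0,0)$, and add the mild genericity assumptions $\beta_1,\beta_2\ne0$ and, when the constraint binds, $\alpha\ne x^{\star}$, where $x^{\star}:=\beta_2/(\beta_1+\beta_2)$.

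Next I would pin down the one-dimensional geometry. Writing a strategy as its action-$1$ weight, $\mathcal C=\{y\in[0,1]:\ c_2+(c_1-c_2)y\le V\}$ is, because $c_1\ne c_2$ and $\mathcal C\ne\emptyset$, a closed interval $[\ell,u]\subseteq[0,1]$: explicitly $[0,\min(\alpha,1)]$ if $c_1>c_2$ and $[\max(\alpha,0),1]$ if $c_1<c_2$. I introduce the affine function $f(x):=\beta_1 x-\beta_2(1-x)$, the payoff advantage of action $1$ over action $2$ when the population state has action-$1$ frequency $x$; it has slope $\beta_1+\beta_2$, and when $\beta_1+\beta_2\ne 0$ its unique zero is $x^{\star}$. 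A one-line computation shows that against population state $x$ a strategy with action-$1$ weight $y$ earns $\beta_2(1-x)+y\,f(x)$, so the constrained best reply at $x$ is $\{u\}$ when $f(x)>0$, $\{\ell\}$ when $f(x)<0$, and all of $[\ell,u]$ when $f(x)=0$.

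Then I would split into the three sign patterns of the affine $f$ on $[\ell,u]$. (a) $f$ keeps a constant sign: if $f>0$ throughout, then $u$ is the unique constrained best reply to itself, hence a strict constrained symmetric equilibrium and therefore a constrained ESS (the second inequality being vacuous for strict equilibria); symmetrically $\ell$ works when $f<0$ throughout. This case absorbs the dominant-strategy games and the sub-case $\beta_1+\beta_2=0$. (b) $f(\ell)<0<f(u)$, i.e. $\beta_1+\beta_2>0$ with $x^{\star}\in(\ell,u)$ (coordination type): both $\ell$ and $u$ are strict constrained equilibria by the same token, so a constrained ESS exists. (c) $f(\ell)>0>f(u)$, i.e. $\beta_1+\beta_2<0$ with $x^{\star}\in(\ell,u)$ (anti-coordination type): then $f(x^{\star})=0$, so every $y\in\mathcal C$ is a best reply to $x^{\star}$ and $x^{\star}$ is a constrained symmetric equilibrium; to see it is an ESS I compute, for $y\in\mathcal C\setminus\{x^{\star}\}$,
$$\langle x^{\star},\bar{A}^{\Lambda}y\rangle-\langle y,\bar{A}^{\Lambda}y\rangle=(x^{\star}-y)\,f(y)>0,$$
the strict inequality holding because $f$ is strictly decreasing through its zero at $x^{\star}$. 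Finally I would dispatch the non-generic residue: a singleton $\mathcal C$ is trivially a constrained ESS, a non-binding constraint gives $[\ell,u]=[0,1]$ (already covered by (a)--(c)), and the case $\beta_1=\beta_2=0$ is precisely the degenerate one with constant transformed payoffs, which the statement excludes.

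The step I expect to require the most care is the interface with the definition of a constrained ESS: one must keep the deviants confined to $\mathcal C$ — this is exactly what promotes the interval endpoints $\ell,u$ to genuine constrained equilibria even when they are not equilibria of the unconstrained game — and one must actually verify the second ESS inequality at the interior point in case (c), via the displayed identity, rather than simply quoting the unconstrained fact that an interior equilibrium of a $2\times2$ anti-coordination game is an ESS, since here the simplex has been truncated to $[\ell,u]$. Everything else is the routine affine bookkeeping indicated above.
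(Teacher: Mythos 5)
Your proof is correct and follows essentially the same route as the paper: reduce to the diagonal game $\bar{A}^{\Lambda}=\mathrm{diag}(\beta_1,\beta_2)$, treat the constraint set as a closed subinterval of $[0,1]$, and exhibit a constrained ESS by a sign case analysis on $\beta_1,\beta_2$ (the paper enumerates the Type I/Type II constraint intervals and the four sign patterns explicitly, which your function $f$ merely repackages). Your write-up is in fact tighter than the paper's in two respects it leaves implicit: you justify that the column shift preserves ESS and not only symmetric Nash equilibria, and you actually verify the stability inequality $(x^{\star}-y)f(y)>0$ at the interior equilibrium in the anti-coordination case rather than asserting it.
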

 
 Below we prove this statement. By generically, we mean  that $\beta_1\beta_2\neq 0.$  We distinguish two cases depending on the coefficient $c_1$ and $c_2.$
\begin{itemize} \item Let $c_1>c_2.$ If $V>c_1,$ then $\mathcal{C}=[0,1]$ (unconstraint case, there is an ESS). If $V\leq c_1,$ then $\mathcal{C}=[0,\alpha]\subset [0,1].$ \item Let $c_1<c_2$. If $V<c_1,$ $\mathcal{C}=\emptyset$ (excluded by hypothesis). If $V\geq c_1,$ $\mathcal{C}=[\alpha, 1].$ 
    \end{itemize}
    We then have to examine two types of constraints: 
     $$\mbox{Type I:}\ c_1>c_2,\ V<c_1,\mathcal{C}=[0,\alpha],$$  $$\mbox{Type II:}\ c_1< c_2,\ V>c_1,\mathcal{C}=[\alpha, 1]$$
    \subsubsection{Type I}
    \begin{itemize} \item Consider the following setup:
     $$\bar{A}^{\Lambda}=\left(\begin{array}{cc} \beta_1 & 0 \\ 0 & \beta_2  \end{array}\right), \beta_1>0,\beta_2\leq 0, \mathcal{C}=[0,\alpha].$$ The first strategy dominates the second one in the unconstrained game. Hence, the  mixed strategy $m=\alpha$ is the unique ESS in the constrained game. Note that $m=\alpha$ is not an ESS in the unconstrained game.     \item $$\bar{A}^{\Lambda}=\left(\begin{array}{cc} \beta_1 & 0 \\ 0 & \beta_2  \end{array}\right), \beta_1\leq 0,\beta_2> 0, \mathcal{C}=[0,\alpha].$$ The second strategy dominates the first one in the unconstrained game. Hence, the strategy mixed strategy $m=0$ is the unique ESS in the constrained game. This situation belongs to the  class of  prisoner's Dilemma games.
    \item We now swap the sign of $\beta$  with  $$\bar{A}^{\Lambda}=\left(\begin{array}{cc} \beta_1 & 0 \\ 0 & \beta_2  \end{array}\right), \beta_1>0,\beta_2> 0, \mathcal{C}=[0,\alpha].$$  This situation belongs to the  the class of  Coordination Games. The first pure action is being eliminated by the constraint, the second strategy is the unique ESS.
        \item We now look when both $\beta_1$ and $\beta_2$ are negative: $$\bar{A}^{\Lambda}=\left(\begin{array}{cc} \beta_1 & 0 \\ 0 & \beta_2  \end{array}\right), \beta_1<0,\beta_2< 0, \mathcal{C}=[0,\alpha].$$ This situation belongs to the class of Hawk-Dove games.
If $\frac{\beta_2}{\beta_1+\beta_2}\geq \alpha$ then $(\frac{\beta_2}{\beta_1+\beta_2},\frac{\beta_1}{\beta_1+\beta_2})$ is an ESS; else if $\frac{\beta_2}{\beta_1+\beta_2}>\alpha$ then the constrained best response set is
         $$CBR(m)=\arg\max_{x\in\mathcal{C}}\ (x,1-x)A^{\Lambda} \left(\begin{array}{cc} m \\ 1-m \end{array}\right)$$ $$=
         \left\{\begin{array}{cc} \alpha & \mbox{if}\ m< \frac{\beta_2}{\beta_1+\beta_2}, \\
         0 & \mbox{if}\ m> \frac{\beta_2}{\beta_1+\beta_2}, \\
         \mathcal{C}  & \mbox{if}\ m=\frac{\beta_2}{\beta_1+\beta_2}.
          \end{array}\right.$$
         Thus, $CBR(\alpha)=\{\alpha\}$ and $\alpha$ is a constrained ESS.
 \end{itemize}

    \subsubsection{Type II}
    If $\alpha>1,$ the interval $[\alpha,1]$  is empty. We now suppose that $V>c_1$ and $\alpha <1.$
    \begin{itemize}
    \item $\bar{A}^{\Lambda}=\left(\begin{array}{cc} \beta_1 & 0 \\ 0 & \beta_2  \end{array}\right), \beta_1>0,\beta_2\leq 0, \mathcal{C}=[\alpha,1].$ The first strategy $m=1$ is the unique ESS in the constrained game.
    \item $\bar{A}^{\Lambda}=\left(\begin{array}{cc} \beta_1 & 0 \\ 0 & \beta_2  \end{array}\right), \beta_1\leq 0,\beta_2> 0, \mathcal{C}=[\alpha,1]$  the strategy mixed strategy $m=\alpha$ is the unique constrained ESS.
    \item $\bar{A}^{\Lambda}=\left(\begin{array}{cc} \beta_1 & 0 \\ 0 & \beta_2  \end{array}\right), \beta_1>0,\beta_2> 0, \mathcal{C}=[\alpha,1]$. The second strategy $m=0$ does not satisfy the constraint. The first strategy  is a constrained ESS.
        \item $\bar{A}^{\Lambda}=\left(\begin{array}{cc} \beta_1 & 0 \\ 0 & \beta_2  \end{array}\right), \beta_1<0,\beta_2< 0, \mathcal{C}=[0,\alpha].$
  The mixed strategy $\min\left(\frac{\beta_2}{\beta_1+\beta_2},\alpha\right)$ is a constrained ESS.
\end{itemize}

 \subsubsection{Heterogeneous population}
 Consider a  population game characterized by a payoff function: $$r_i(., .): \ \mathcal{A}_i\times \prod_j \mathcal{P}(\mathcal{A}_j)\rightarrow \mathbb{R},$$ where $\mathcal{A}_i$ is finite (and non-empty) and  $\mathcal{P}(\mathcal{A}_i)$ is the space of probability measures over $\mathcal{A}_i.$ The probability vector $m_i\in \mathcal{P}(\mathcal{A}_i)$ represents the aggregative population state of $i$, i.e., the fraction of players per action at population $i$.  We denote the payoff function a generic player of subpopulation $i$ as $ r_i (a, m)=: r_{ia}(m).$ Collecting together one obtains a vector payoff function $r(m)= (r_{ia}(m))_{i, a_i\in \mathcal{A}_i}.$ The empathetic payoff function vector is  $r^{\Lambda}.$
A Nash equilibrium  of the empathetic game  is a population profile $m^*$  that satisfies the following variational inequality:  for every $i,$ $$\langle m^*-m, r^{\Lambda}(m^*)\rangle \geq 0,\ \forall\ m_i\in\mathcal{P}(\mathcal{A}_i).$$
Assuming  that for every action $ a_i\in \mathcal{A}_i,$  the function $ m \longmapsto r^{\Lambda}_i (a, m)=: r^{\Lambda}_{ia}(m),$ is continuous, one can easily show that the  population game has at least one Nash equilibrium. The proof uses a direct application of Brouwer fixed-point theorem and is therefore omitted.

\subsection{Empathetic Learning}  \label{sec:four}
 We   introduce a way of revising the actions of a player from subpopulation $p$ called "revision protocol" as $\eta^p_{ab}(m, r^{\Lambda}(m))\geq 0$ which represents the {\it rate of switching} from action $a$ to $b$ when the entire population profile is $m.$ A population profile together with a learning rule (revision protocol) and a learning rate sequence $\hat{\lambda}$ defines a discrete-time game dynamics $L_{\eta}$, given by,

%\begin{equation}\label{SDEutt2}
%\left\{\begin{array}{lll}
%m_{a,t+1}&=&m_{a,t} \\  & &+\ \lambda_t\sum_{b\in \mathcal{A}}m_{b,t} \eta_{ba}(m_t, r(m_t))\\   & &- \lambda_t m_{a,t}\sum_{b\in \mathcal{A}} \eta_{ab}(m_t, r(m_t)),\\
% & & m_{a,0}\geq 0,\  \sum_{a\in \mathcal{A}} m_{a,0} =1.
%\end{array}\right.
%\end{equation}

\begin{equation}\label{SDEutt6}
\left\{\begin{array}{lll}
m_{a,t+1}^p&=&m_{a,t}^p \\  & &+\ \hat{\lambda}_t\sum_{b\in \mathcal{A}^p}m_{b,t}^p \eta_{ba}^p(m_t, r^{\Lambda}(m_t))\\   & &- \hat{\lambda}_t m^p_{a,t}\sum_{b\in \mathcal{A}^p} \eta_{ab}^p(m_t, r^{\Lambda}(m_t)),\\
 & & m_{a,0}\geq 0,\  \sum_{a\in \mathcal{A}^p} m_{a,0} =1,
\end{array}\right.
\end{equation}
where $\hat{\lambda}_t\geq 0$  is the learning rate sequence. Note that the learning dynamics is well-defined for arbitrary learning rate $\hat{\lambda}_t\geq 0.$

In view of (\ref{SDEutt6}), the first term describes the population state at the previous time-step,
the second term represents the  inflow into the action $a$ from other actions, whereas
the third term provides the outflow from action $ a$ to other actions. The difference between  
these last two terms is the  change in the use of action $a$, that added
to the original proportion provides us with the new proportion of use of action $a.$
We now check that  (\ref{SDEutt6}) is well-defined over the simplex $\mathcal{P}(\mathcal{A})$. 
The lemma below states that if the starting point is inside the domain  $\mathcal{P}(\mathcal{A})$ then the dynamics will remain inside $\mathcal{P}(\mathcal{A}):$ the dynamics is forward invariant.
\begin{lemma} For a well-designed learning rates  $\hat{\lambda}_t,$ 
The simplex  $\mathcal{P}(\mathcal{A})$ is forward invariant under (\ref{SDEutt6}), i.e., if initially $m_0\in \mathcal{P}(\mathcal{A})$ then for every $t\geq 0$, the solution of (\ref{SDEutt6}),  $m_t\in \mathcal{P}(\mathcal{A}).$
\end{lemma}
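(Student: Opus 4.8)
The plan is to verify the two defining properties of the simplex---nonnegativity of each coordinate and the unit-sum condition---are preserved by one step of the update \eqref{SDEutt6}, and then conclude by induction on $t$. The sum condition is the easy part: summing \eqref{SDEutt6} over $a\in\mathcal{A}^p$, the inflow term $\sum_a\sum_b m^p_{b,t}\eta^p_{ba}$ and the outflow term $\sum_a m^p_{a,t}\sum_b\eta^p_{ab}$ are equal (both equal $\sum_{a,b}m^p_{b,t}\eta^p_{ba}$ after relabeling the summation indices $a\leftrightarrow b$), so their difference vanishes and $\sum_a m^p_{a,t+1}=\sum_a m^p_{a,t}=1$ by the inductive hypothesis. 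This holds for every learning rate $\hat\lambda_t$.

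The nonnegativity is where the ``well-designed learning rates'' hypothesis is actually needed, and this is the main obstacle. The worry is the outflow term: the update for coordinate $a$ reads
\[
m^p_{a,t+1}=m^p_{a,t}\Bigl(1-\hat\lambda_t\sum_{b\in\mathcal{A}^p}\eta^p_{ab}(m_t,r^{\Lambda}(m_t))\Bigr)+\hat\lambda_t\sum_{b\in\mathcal{A}^p}m^p_{b,t}\eta^p_{ba}(m_t,r^{\Lambda}(m_t)),
\]
and the second summand is $\geq 0$ because $m^p_{b,t}\geq 0$ (inductive hypothesis) and $\eta^p_{ba}\geq 0$ (definition of a revision protocol). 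So it suffices to keep the bracketed factor nonnegative, i.e. to require
\[
\hat\lambda_t\ \leq\ \Bigl(\max_{p}\ \max_{a\in\mathcal{A}^p}\ \sum_{b\in\mathcal{A}^p}\eta^p_{ab}(m_t,r^{\Lambda}(m_t))\Bigr)^{-1}.
\]
I would state this explicitly as the meaning of ``well-designed'': choosing $\hat\lambda_t$ below the reciprocal of the largest total exit rate guarantees $1-\hat\lambda_t\sum_b\eta^p_{ab}\geq 0$ for every $a$ and $p$. (If all exit rates at $m_t$ are zero the factor is $1$ and any $\hat\lambda_t$ works.) Since $\mathcal{A}^p$ is finite, this maximum is finite, so such a positive $\hat\lambda_t$ always exists; continuity of $m\mapsto\eta^p_{ab}(m,r^\Lambda(m))$ together with compactness of $\mathcal{P}(\mathcal{A})$ even lets one pick a single step size valid uniformly in $t$ if desired.

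With both properties in hand the argument closes: assuming $m_t\in\mathcal{P}(\mathcal{A})$, the bound on $\hat\lambda_t$ gives $m^p_{a,t+1}\geq 0$ for all $a,p$, and the index-swap computation gives $\sum_{a}m^p_{a,t+1}=1$ for each subpopulation $p$; hence $m_{t+1}\in\mathcal{P}(\mathcal{A})$. Starting from the hypothesis $m_0\in\mathcal{P}(\mathcal{A})$, induction on $t$ yields $m_t\in\mathcal{P}(\mathcal{A})$ for all $t\geq 0$, which is the claimed forward invariance. The only genuinely substantive point is the explicit identification of the admissible range of learning rates; everything else is the bookkeeping of rearranging a double sum.
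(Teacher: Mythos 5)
Your proof is correct. Note that the paper states this lemma without providing any proof at all, so there is no argument of the authors' to compare against; your write-up supplies exactly what is missing. Both halves check out: the unit-sum preservation follows from the index swap $a\leftrightarrow b$ making the aggregate inflow $\sum_{a,b}m^p_{b,t}\eta^p_{ba}$ equal to the aggregate outflow, and this indeed holds for \emph{any} $\hat\lambda_t$; nonnegativity reduces, as you say, to keeping the factor $1-\hat\lambda_t\sum_{b}\eta^p_{ab}(m_t,r^{\Lambda}(m_t))$ nonnegative, since the remaining inflow term is a nonnegative combination of nonnegative quantities. Your explicit condition
\[
\hat\lambda_t\ \leq\ \Bigl(\max_{p}\max_{a\in\mathcal{A}^p}\sum_{b\in\mathcal{A}^p}\eta^p_{ab}(m_t,r^{\Lambda}(m_t))\Bigr)^{-1}
\]
is the natural formalization of the paper's vague phrase ``well-designed learning rates,'' and finiteness of the action sets guarantees such a $\hat\lambda_t>0$ exists at every step. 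The only caveat is your closing remark about choosing a step size uniform in $t$: that requires continuity (or at least boundedness) of the revision protocol over the compact simplex, which the paper never assumes for $\eta$, so you should flag it as an additional hypothesis rather than a free consequence. The induction itself is routine and closes the argument.
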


%\begin{proof} For two actions per player, the heterogeneous game dynamics yields
%\begin{equation}\label{SDEutt6t}
%\left\{\begin{array}{lll}
%m_{1,t+1}^1&=&m_{1,t}^1 \\  & &+\ \hat{\lambda}_t (1-m^1_{1,t}) \eta_{21}^1\\   & &- \hat{\lambda}_t m^1_{1,t}\eta_{12}^1,\\
%m_{1,t+1}^2&=&m_{1,t}^2 \\  & &+\ \hat{\lambda}_t (1-m^2_{1,t}) \eta_{21}^2\\   & &- \hat{\lambda}_t m^2_{1,t}\eta_{12}^2,\\
% & & m^1_{1,0}\in [0,1],\  m^2_{1,0}\in [0,1].
%\end{array}\right.
%\end{equation}
%\end{proof}

\begin{proposition}
If the empathetic game belongs to the class of coordination, anticoordination or prisoner's dilemma, there is a convergence to the set of pure equilibria under Brown-von Neumann-Nash (BNN), Smith and replicator dynamics.
\end{proposition}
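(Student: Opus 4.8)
The plan is to reduce, for each of the three game classes, the analysis to a planar dynamical system on the square $[0,1]^2$ (the two population states $x$ and $y$) and to exhibit a single Lyapunov structure that serves the replicator, BNN and Smith dynamics simultaneously. First I would normalize each player's payoff matrix to the diagonal form of the earlier proposition, so that Player~1's incentive to switch between her two actions is governed by the signs of $\beta_1,\beta_2$, with the analogous coefficients controlling the column player. In this normalized form one checks directly that all three classes are \emph{exact potential games}: for coordination and anti-coordination the function $P(x,y)=\beta_1 xy+\beta_2(1-x)(1-y)$ (the row and column potentials coinciding up to sign once the off-diagonal entries are killed by the normalization) is an exact potential, while the Prisoner's Dilemma is dominance-solvable and trivially a potential game because each player's payoff difference has constant sign.

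The second step is the Lyapunov argument. Replicator, BNN and Smith dynamics all satisfy \emph{positive correlation}: along any non-stationary trajectory the inner product of the motion with the empathetic payoff vector is strictly positive, so $\frac{d}{dt}P(x_t,y_t)>0$ off the rest-point set, and (by the stationarity remark recalled earlier) the rest points are exactly the Nash equilibria of $G_\Lambda$. LaSalle's invariance principle then forces every trajectory to converge to the finite set of Nash equilibria, which in particular rules out interior limit cycles and persistent oscillations. For the Prisoner's Dilemma this already finishes the argument, since the only rest point is the pure dominant-strategy profile; alternatively one invokes that BNN, Smith and replicator all eliminate strictly dominated strategies, giving global convergence.

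The third step handles coordination and anti-coordination, where there are two strict pure equilibria and one interior equilibrium $x^*=\beta_2/(\beta_1+\beta_2)$. Here I would show that the interior rest point is not Lyapunov stable: for the replicator dynamics it is a hyperbolic saddle of the planar system (the Jacobian has one positive and one negative eigenvalue because $\beta_1$ and $\beta_2$ share a sign), and for BNN and Smith one argues from the ESS characterization recalled in the paper that an interior state which fails to be an ESS cannot be asymptotically stable. Hence the stable set of the interior equilibrium is a one-dimensional (measure-zero) separatrix, its complement is the union of the basins of the two pure equilibria, and every trajectory starting off the separatrix converges to a pure equilibrium; so the dynamics converge to the set of pure equilibria from almost every initial condition.

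The step I expect to be the main obstacle is the uniform treatment of BNN and Smith in this last step: unlike the replicator dynamics they are not of smooth product form (Smith is only Lipschitz), so the clean eigenvalue computation is unavailable and one must lean on the abstract properties---Nash stationarity, positive correlation, and non-stability of non-ESS interior rest points---together with the potential-game Lyapunov function. Making ``instability of the interior rest point'' rigorous for the non-differentiable Smith dynamics, and being precise about the negligible exceptional set (equivalently, phrasing the conclusion as convergence \emph{to the set} of pure equilibria for all but a measure-zero set of initial conditions), is the delicate point; the remainder is routine planar phase-plane analysis.
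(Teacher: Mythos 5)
Your proposal is essentially correct, but it is a genuinely different --- and far more substantive --- argument than the one in the paper. The paper's ``proof'' consists of the sentence ``the proof is immediate as illustrated,'' followed by the generic two-population update equations and a pointer to the vector-field plots in Figures~\ref{fig:coord}, \ref{fig:anticoord} and \ref{fig:pd}; it is an appeal to phase portraits rather than a deductive argument. You instead supply the standard machinery from the evolutionary-dynamics literature: reduction to the diagonal normal form with coefficients $\beta_1,\beta_2$, identification of a potential, positive correlation of BNN, Smith and replicator dynamics, LaSalle's invariance principle, and a separate instability argument for the interior rest point. This buys an actual proof (and makes explicit the measure-zero exceptional set on the separatrix of the interior saddle, a qualification the paper's statement silently omits), at the cost of invoking results the paper never states. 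Two points deserve tightening. First, a generic two-population coordination game is only a \emph{weighted} potential game (the row and column coefficients $\beta_i$ and their column analogues need not satisfy the exact-potential identity), so your $P(x,y)=\beta_1xy+\beta_2(1-x)(1-y)$ must be replaced by a weighted version; positive correlation holds population-by-population, so the Lyapunov derivative argument survives with positive weights. Second, for the replicator dynamics the rest-point set is strictly larger than the Nash set (it contains all four corners, including the non-Nash ones), so LaSalle alone only gives convergence to rest points; you need the additional observation that the non-Nash corners are not local maximizers of the potential (they are minima or saddles), hence attract no interior trajectory. With those two repairs your argument is complete and rigorous, which is more than can be said of the paper's.
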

The proof is immediate as illustrated.   For two actions per player, the heterogeneous game dynamics yields
\begin{equation}\label{SDEutt6t}
\left\{\begin{array}{lll}
m_{1,t+1}^1&=&m_{1,t}^1 \\  & &+\ \hat{\lambda}_t (1-m^1_{1,t}) \eta_{21}^1\\   & &- \hat{\lambda}_t m^1_{1,t}\eta_{12}^1,\\
m_{1,t+1}^2&=&m_{1,t}^2 \\  & &+\ \hat{\lambda}_t (1-m^2_{1,t}) \eta_{21}^2\\   & &- \hat{\lambda}_t m^2_{1,t}\eta_{12}^2,\\
 & & m^1_{1,0}\in [0,1],\  m^2_{1,0}\in [0,1].
\end{array}\right.
\end{equation}

The behavior under these dynamics are illustrated in Figures \ref{fig:coord}, \ref{fig:anticoord}, \ref{fig:pd}. In addition, the homogeneous population dynamics converges to the unique ESS in anticoordination games. It remains to analyze the class of matching pennies games.
In the matching pennies  we find that even if the game has a unique mixed strategy equilibrium, the equilibrium point may be unstable in the sense that for any initial condition (other than that equilibrium point), the system never converges to the equilibrium point. One innovative result is that, thanks to empathy,  this instability can be broken when both players are empathetic  with opposite signs.

\begin{proposition}
Under specific empathy matrix $\Lambda$ matching pennies game can be tranformed to a coordination game. Therefore, empathy can be used in order to breaking the instability of the matching pennies game.
\end{proposition}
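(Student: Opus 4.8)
The plan is to specialize to the canonical matching pennies game, apply the empathetic payoff transformation of Table~\ref{table:Empathy}, and then choose $\Lambda$ so that the four strict inequalities that define a coordination game (Section~\ref{sec:two}) are all satisfied in the transformed game. Write the base game as $A=\left(\begin{smallmatrix}1 & -1\\ -1 & 1\end{smallmatrix}\right)$ for the row player and $B=-A$ for the column player; for $\Lambda=I$ this is the zero-sum discoordination game whose only equilibrium is the fully mixed $(1/2,1/2)$, which the earlier vector-field analysis shows is Lyapunov-unstable and around which the dynamics cycle. With $\Lambda=\left(\begin{smallmatrix}\lambda_{11} & \lambda_{12}\\ \lambda_{21} & \lambda_{22}\end{smallmatrix}\right)$ the empathetic matrices become $A^{\Lambda}$ with $a^{\Lambda}_{ij}=\lambda_{11}a_{ij}+\lambda_{12}b_{ij}=(\lambda_{11}-\lambda_{12})\,a_{ij}$ and $B^{\Lambda}$ with $b^{\Lambda}_{ij}=\lambda_{22}b_{ij}+\lambda_{21}a_{ij}=(\lambda_{21}-\lambda_{22})\,a_{ij}$; the key structural observation is that empathy turns the column player's payoff, originally $-A$, into a scalar multiple of $A$.

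Next I would read off the coordination conditions. For Player 1, both $a^{\Lambda}_{11}>a^{\Lambda}_{21}$ and $a^{\Lambda}_{22}>a^{\Lambda}_{12}$ collapse to the single scalar inequality $\lambda_{11}-\lambda_{12}>0$, and for Player 2, both $b^{\Lambda}_{11}>b^{\Lambda}_{12}$ and $b^{\Lambda}_{22}>b^{\Lambda}_{21}$ collapse to $\lambda_{21}-\lambda_{22}>0$. Hence \emph{any} $\Lambda$ with $\lambda_{11}>\lambda_{12}$ and $\lambda_{21}>\lambda_{22}$ converts matching pennies into a generic coordination game with pure equilibria at $(1,1)$ and $(0,0)$. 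Normalizing $\lambda_{11}=\lambda_{22}=1$ and taking $\lambda_{12}=-\varepsilon<0$, $\lambda_{21}=1+\varepsilon>0$ for any $\varepsilon>0$ exhibits such a matrix whose off-diagonal entries have opposite sign: Player 1 is partially spiteful toward Player 2 while Player 2 is (strongly) partially altruistic toward Player 1, which is precisely the ``both players empathetic with opposite signs'' regime announced. I would also note the mild caveat that for a general discoordination game the same reduction works whenever the payoff gaps $a_{11}-a_{12}$ and $a_{22}-a_{21}$ share a common sign (automatic in the zero-sum case), the conditions then being that $\lambda_{21}$ is large enough of that sign and $\lambda_{12}$ small enough.

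Finally I would invoke the results already established for coordination games to close the argument: the transformed $G_{\Lambda}$ has the two pure Nash equilibria $(1,1)$ and $(0,0)$, both of which are evolutionarily stable while the interior equilibrium is unstable, and by the earlier Proposition the heterogeneous dynamics~(\ref{SDEutt6t}) under the Brown--von Neumann--Nash, Smith and replicator protocols converge to the set of pure equilibria (with the homogeneous dynamics stabilizing likewise). Therefore, under such a $\Lambda$, trajectories of the empathetic matching pennies game no longer orbit the mixed point but converge to a stable pure ESS, so the instability is broken. The only genuine obstacle is the bookkeeping in deriving the conditions, namely verifying that one and the same $\Lambda$ can simultaneously preserve Player 1's ``matching'' preference and flip Player 2's ``mismatching'' preference; what makes this routine rather than delicate is that each player's pair of inequalities reduces to a single linear condition, so the set of admissible $\Lambda$ is an explicit, non-empty half-space-type region that one simply exhibits a point of.
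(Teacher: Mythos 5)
Your proposal is correct and takes essentially the same approach as the paper: apply the empathetic transformation to matching pennies and choose the signs of $\Lambda$ (off-diagonal entries of opposite sign) so that the transformed payoffs satisfy the coordination-game inequalities, yielding two stable pure equilibria in place of the unstable mixed one. You are in fact somewhat more complete than the paper's own argument, which only verifies the two inequalities making $(1,1)$ a strict Nash equilibrium, whereas you exploit the zero-sum structure to reduce all four conditions to $\lambda_{11}>\lambda_{12}$ and $\lambda_{21}>\lambda_{22}$ and also invoke the convergence results for coordination games to close the stability claim.
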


\begin{proof}
Let the non-empathetic matching pennies game be $ a_{11}>a_{21}, \ a_{22}>a_{12}$ while $ b_{11}<b_{12}, \ b_{21}>b_{22}.$ 
Let $\lambda_{11}>0, \lambda_{12}<0$ while  $\lambda_{22}<0, \lambda_{21}>0.$ Then, 
Then the following inequalities hold: 
$$a_{11}^{\lambda}:=\lambda_{11}a_{11}+\lambda_{12}b_{11} > a_{21}^{\lambda}:=\lambda_{11}a_{21}+\lambda_{12}b_{21}.$$
and 
$b_{11}^{\lambda}:= \lambda_{22}b_{11}+\lambda_{21}a_{11}$ is greater than
$b_{12}^{\lambda}:=   \lambda_{22}b_{12}+\lambda_{21}a_{12}.$ It is turns out the pure $(1,1)$ becomes a strict Nash equilibrium in the empathetic game 
with  $\lambda_{11}>0, \lambda_{12}<0, \lambda_{22}<0, \lambda_{21}>0.$ This stabilizes the empathetic evolutionary dynamics to a new equilibrium.
\end{proof}

The payoff matrix of matching pennies game with the following empathy  structure $\lambda_{11}>0, \lambda_{12}<0, \lambda_{22}<0, \lambda_{21}>0$ leads to a coordination game (see Table \ref{table:Empathy2}). Hence this leads to  two stable pure equilibria at the corner $(1,1)$ and $(0,0)$ as displayed in Figure \ref{fig:coord}.

\begin{table}
\centering
\begin{tabular}{cc|c|c|c|c|l}
\cline{3-4}
& & \multicolumn{2}{ c| }{ I} \\ \cline{3-4}
& & L & R \\ \cline{1-4}
\multicolumn{1}{ |c  }{\multirow{2}{*}{ II} } &
\multicolumn{1}{ |c| }{U} & $ (\lambda_{11}-\lambda_{12},-\lambda_{22}+\lambda_{21})$ & $ (-\lambda_{11}+\lambda_{12},\lambda_{22}-\lambda_{21})  $      \\ \cline{2-4}
\multicolumn{1}{ |c  }{}                        &
\multicolumn{1}{ |c| }{D } & $(-\lambda_{11}+\lambda_{12},\lambda_{22}-\lambda_{21}) $ & $(\lambda_{11}-\lambda_{12},-\lambda_{22}+\lambda_{21}) $   \\
 \cline{1-4}
\end{tabular}
\caption[]{ The payoff bimatrix of matching pennies game with empathy $\lambda_{11}>0, \lambda_{12}<0, \lambda_{22}<0, \lambda_{21}>0$ leads to anticoordination game. Hence two stable pure equilibria }
\label{table:Empathy2}
\end{table}

Figure \ref{fig:mp1} illustrates  cycling behavior  in two-population matching pennies game under replicator dynamics with starting point $(0.4,0.6),(0.6,0.4)$ and $\Lambda=I.$ Figure \ref{fig:emp2} illustrates an elimination of limit cycle in an empathetic two-population matching pennies game with empathy structure  under replicator dynamics with starting point $(0.55,.45),(0.65,0.35)$ by changing the empathy structure to be  $\Lambda=\left(\begin{array}{cc}1 & 0.0001 \\ 0.0001 & 
-1\end{array}\right)$ 

%% {\color{red} please refer here to the actual numbers of $\Lambda$ used to produce these figures, which algorithm was used in $\eta$  of the learning algorithm in (6) and possible initial conditions.}

\begin{figure}
\includegraphics[scale=0.5]{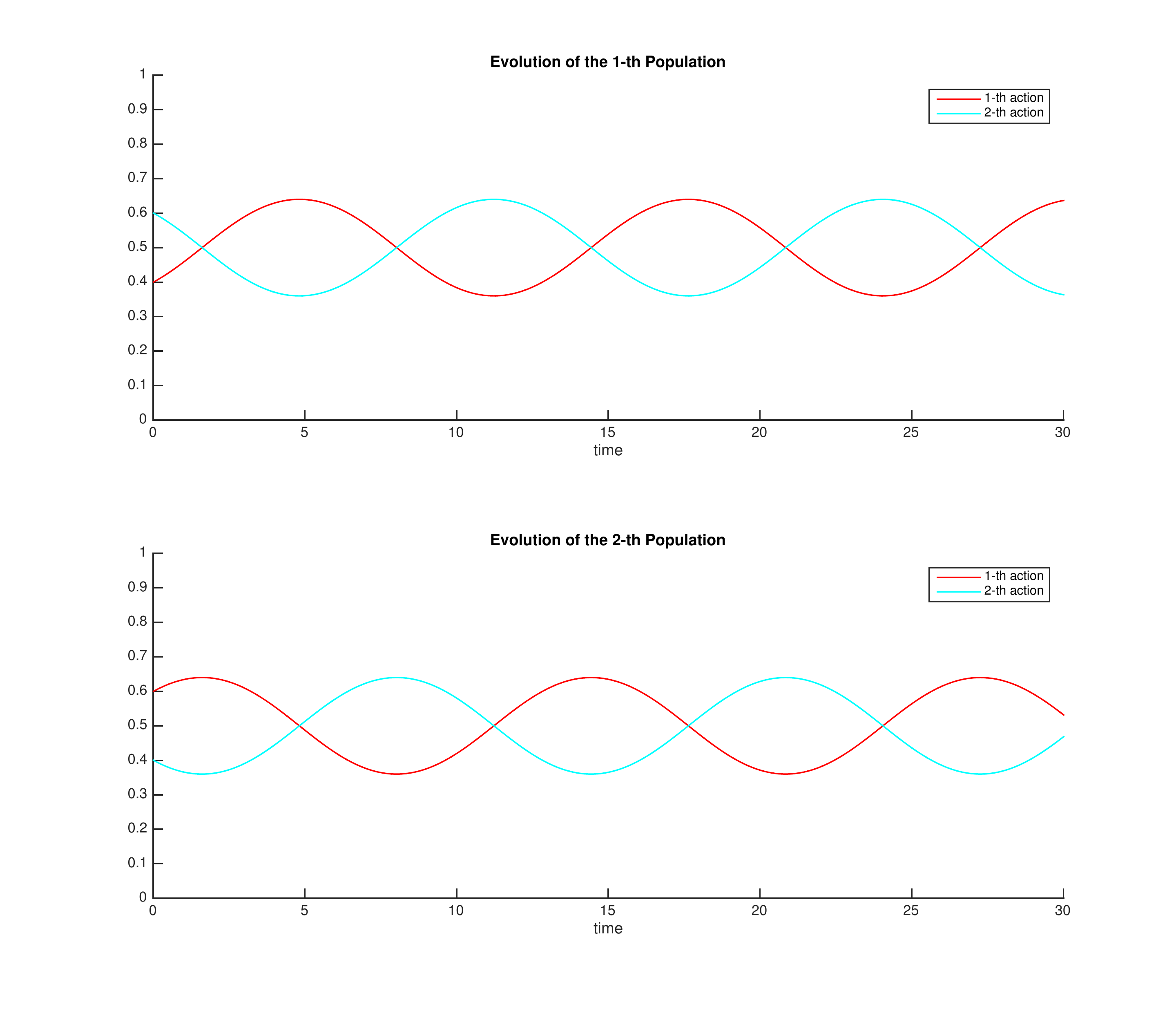}\\ \caption{Cycling in two-population matching pennies game under replicator dynamics.
 } \label{fig:mp1}
\end{figure}

\begin{figure}
\includegraphics[scale=0.5]{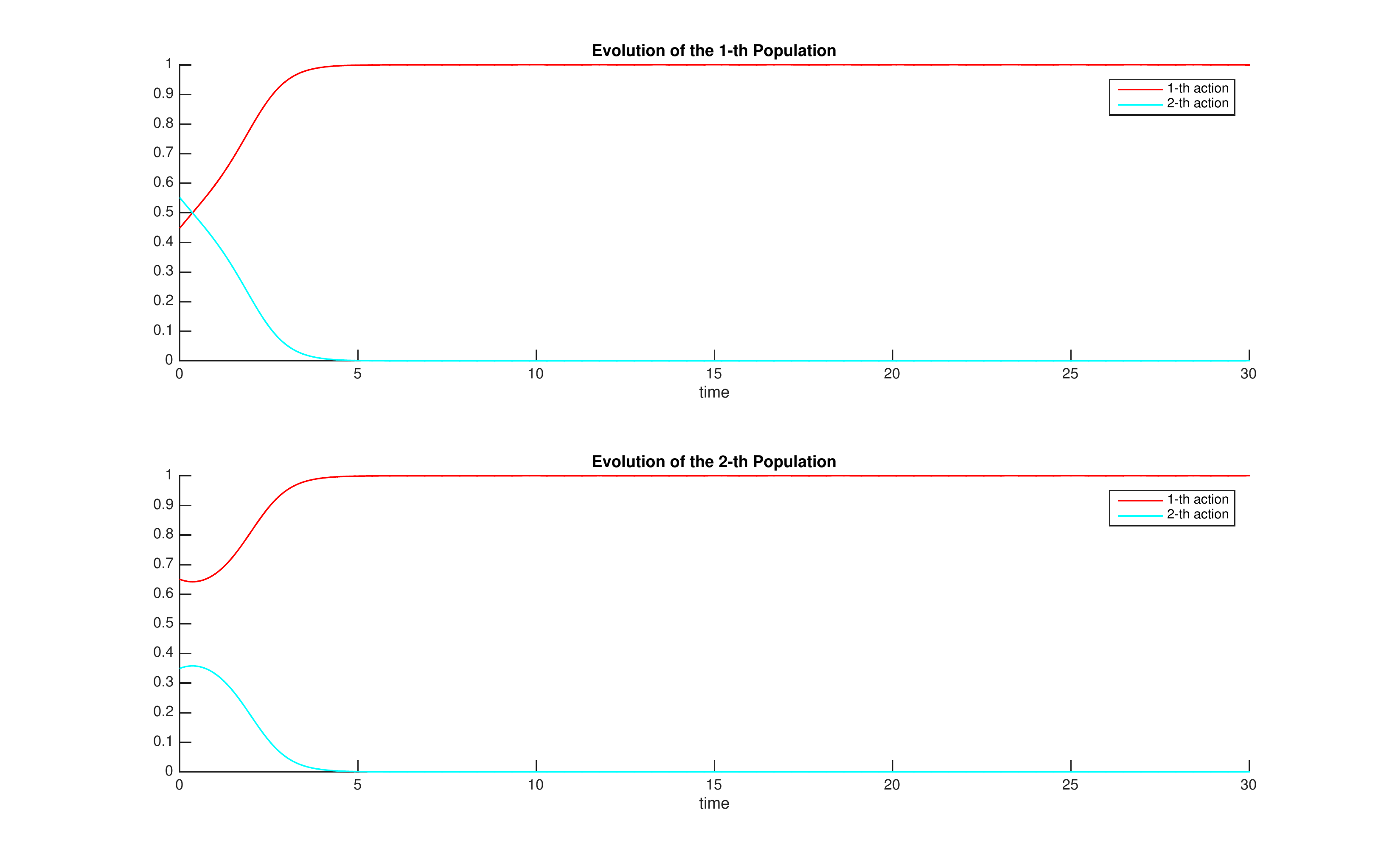}\\ \caption{Elimination of limit cycle in  an empathetic two-population matching pennies game under replicator dynamics.
 } \label{fig:emp2}
\end{figure}

\begin{proposition} In games with a dominant strategy, the involvement of positive empathy may permit the survival of the dominant strategy. Moreover, the presence of the non-neutral empathy may change the equilibrium structure.
\end{proposition}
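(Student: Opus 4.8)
The plan is to prove both claims through the canonical representative of games with a dominant strategy — the prisoner's dilemma — combined with the classification of Section~\ref{sec:two} and the homogeneous-population reduction $\bar A^{\Lambda}=\mathrm{diag}(\beta_1,\beta_2)$ already established. Fix the symmetric instance $a_{21}>a_{11}>a_{22}>a_{12}$, $b_{ij}=a_{ji}$, and the positive-empathy matrix $\Lambda=\left(\begin{smallmatrix}1 & \mu \\ \mu & 1\end{smallmatrix}\right)$ with $\mu\geq 0$. In $G_I$ action $1$ is strictly dominated, so the unique Nash equilibrium — which is also the unique ESS — is the corner profile $(2,2)$, even though $(1,1)$ Pareto-dominates it; this singleton equilibrium set is the baseline structure we want to deform.

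First I would recompute the relevant inequalities in $G_{\Lambda}$ exactly as in the earlier proposition on survival of dominated strategies: $a^{\Lambda}_{11}-a^{\Lambda}_{21}=(a_{11}-a_{21})+\mu(a_{11}-a_{12})$, and, using $b_{ij}=a_{ji}$, the analogous expression for Player~2. Both are positive once $\mu>\frac{a_{21}-a_{11}}{a_{11}-a_{12}}>0$, which already yields the first assertion: for such empathy levels the action that was squeezed out by the dominant-strategy logic in $G_I$ is no longer dominated in $G_{\Lambda}$, i.e.\ positive empathy permits its survival. For the second assertion I would track $\beta_1:=a^{\Lambda}_{11}-a^{\Lambda}_{21}$ and $\beta_2:=a^{\Lambda}_{22}-a^{\Lambda}_{12}$ as functions of $\mu$ and invoke the reduction proposition, by which the symmetric equilibria of $A^{\Lambda}$ coincide with those of $\bar A^{\Lambda}=\mathrm{diag}(\beta_1,\beta_2)$. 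At $\mu=0$ one has $\beta_1<0<\beta_2$ (the prisoner's-dilemma pattern, single pure ESS at $x=0$); as $\mu$ crosses the threshold $\frac{a_{21}-a_{11}}{a_{11}-a_{12}}$, $\beta_1$ changes sign while $\beta_2$ stays positive, so $\beta_1,\beta_2>0$ — precisely the coordination pattern, with two pure equilibria $x=0,\ x=1$ and an interior one at $x=\beta_2/(\beta_1+\beta_2)$. Hence the equilibrium set jumps from a singleton to three points, and the efficient profile $(1,1)$ becomes a strict, hence evolutionarily stable, equilibrium. Choosing $\mu$ negative or very large, or swapping the sign of $\beta_2$, would illustrate the remaining directions of the ``may change'' claim (e.g.\ a Hawk--Dove/anti-coordination pattern with a unique interior ESS), which I would record as a short remark so that the phrasing about non-neutral empathy altering the equilibrium structure is justified in full generality.

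Finally I would observe that restricting to the prisoner's dilemma costs nothing: by the Section~\ref{sec:two} classification, any generic $2\times2$ game with a strictly dominant strategy has its ordinal best-response structure pinned down (up to relabeling of actions/players) by the two sign conditions $\beta_1<0$, $\beta_2>0$, so the same linear-in-$\Lambda$ computation carries over verbatim; it therefore suffices to exhibit one family of empathy matrices producing the sign flip, which the construction above does. The main obstacle I anticipate is bookkeeping rather than conceptual: in the asymmetric case $\lambda_{12}\neq\lambda_{21}$ with a non-symmetric payoff matrix, Player~1's and Player~2's domination thresholds differ and one must take the maximum of the two (as in the earlier proposition), and one must separately check that the interior equilibrium lies genuinely in $(0,1)$ rather than being an artifact of a degenerate $\beta_1+\beta_2=0$; the constrained variant $c_1\neq c_2$, if one wants it, then falls under the Type~I/Type~II dichotomy already developed in Section~\ref{sec:three}.
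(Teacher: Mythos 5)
Your approach matches the paper's: the paper gives no standalone proof of this proposition, treating it instead as a corollary of the earlier proposition on the survival of dominated strategies in the empathetic prisoner's dilemma (together with the illustration in Figure~\ref{fig:pd1}), and your plan is a fleshed-out version of exactly that argument, down to the same threshold $\mu>\frac{a_{21}-a_{11}}{a_{11}-a_{12}}$ and the reuse of the $\bar A^{\Lambda}=\mathrm{diag}(\beta_1,\beta_2)$ reduction.

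One step as written would fail on some instances: you assert that as $\mu$ crosses the threshold, ``$\beta_1$ changes sign while $\beta_2$ stays positive.'' But $\beta_2=(a_{22}-a_{12})+\mu(a_{22}-a_{21})$ is strictly decreasing in $\mu$ (since $a_{22}<a_{21}$) and can already be negative at $\mu^{*}=\frac{a_{21}-a_{11}}{a_{11}-a_{12}}$; for example $a_{21}=10$, $a_{11}=4$, $a_{22}=1$, $a_{12}=0$ gives $\mu^{*}=3/2$ and $\beta_2=-12.5$ there, so the empathetic game at that point has action $1$ dominant rather than a coordination structure. This does not sink the proposition, which is existential: $\beta_1>0$ for both players already makes $(1,1)$ a Nash equilibrium that did not exist in $G_I$, so the equilibrium structure changes whatever the sign of $\beta_2$; and if you want the genuine coordination picture with three equilibria you should either impose the extra condition $(a_{22}-a_{12})(a_{11}-a_{12})>(a_{21}-a_{11})(a_{21}-a_{22})$ on the payoffs or restrict $\mu$ to the window (when nonempty) where both $\beta$'s are positive. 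I would rephrase that sentence as a sufficient condition on the instance rather than a general fact about symmetric prisoner's dilemmas; the rest of the proposal, including the reduction to the PD as the canonical dominant-strategy game and the remark on asymmetric $\lambda_{12}\neq\lambda_{21}$, is sound.
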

As a corollary, cooperative behaviors can be observed even in one-shot prisoner's dilemma games (and hence breaking the dilemma). As illustrated in the diagram of  Figure \ref{fig:pd1}, all kind of equilibria are possible in the 
empathetic game depending on the values of $\lambda_{12}$ and $\lambda_{21}.$

\begin{figure}[htb]
\includegraphics[scale=0.4]{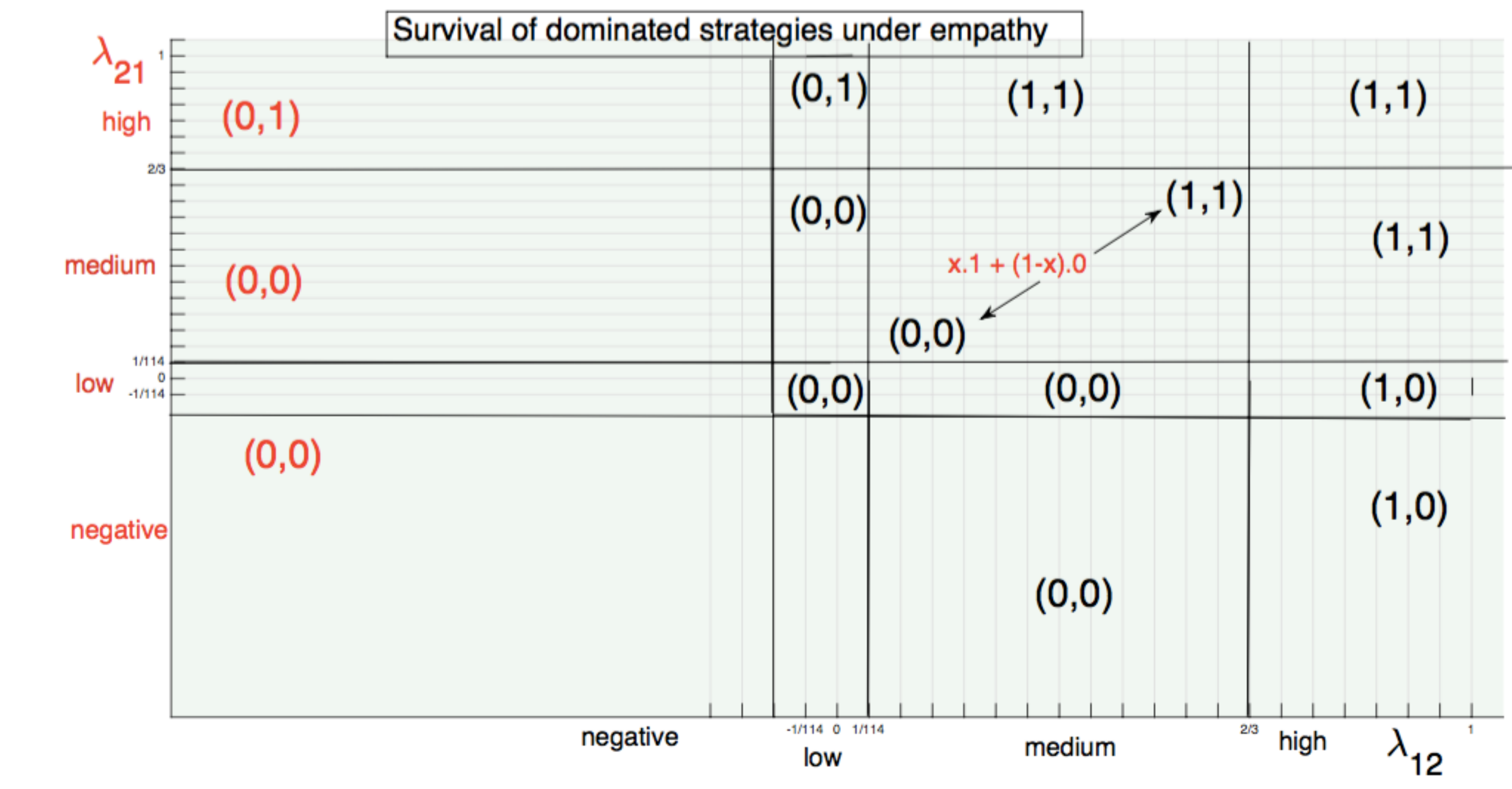}\\ \caption{Different  equilibrium outcomes in the empathetic game. Survival of the (initially) dominated strategy
 } \label{fig:pd1}
\end{figure}

\section{Mutual support and Berge solution} \label{sec:3}
The Berge solution concept was introduced in \cite[page 20]{berge}. See also \cite{reft16,reft17,reft18,reft19,reft20,reft21,reft22} for recent investigation of Berge solution.
The strategy profile $a^*$ is a Berge solution if
$$
r_i(a^*)=\max_{a_{-i}}r_i(a_i^*,a_{-i}).
$$
 Berge strategy yields the best payoffs to the others' players who also play Berge strategies.  
If the players have chosen a strategy profile that forms a Berge solution, and $i$ sticks to the chosen strategy but some of the other players change their strategies, then $i$'s payoff will not increase. This is a resilience to  (single or joint) deviation by other players or other teams. 
 \begin{proposition}
 In the prisoner's dilemma game,  the strategy profile
$(1,1)$ is the unique Berge solution.
\end{proposition}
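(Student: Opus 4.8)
The plan is to verify directly from the definition of a Berge solution that $(1,1)$ satisfies it and that no other profile does, using the Prisoner's Dilemma ordering $a_{21}>a_{11}>a_{22}>a_{12}$ together with the symmetry convention $b_{ij}=a_{ji}$ adopted earlier in the paper. Recall that $a^*$ is a Berge solution when $r_i(a^*)=\max_{a_{-i}} r_i(a_i^*,a_{-i})$ for each player $i$; with two players and two actions this amounts to four scalar comparisons.

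First I would check $(1,1)$. For Player 1, holding $a_1^*=1$ fixed, the opponent's choices give payoffs $r_1(1,1)=a_{11}$ and $r_1(1,2)=a_{12}$; since $a_{11}>a_{12}$, we have $r_1(1,1)=\max_{a_2} r_1(1,a_2)$. For Player 2, holding $a_2^*=1$ fixed, the relevant payoffs are $r_2(1,1)=b_{11}=a_{11}$ and $r_2(2,1)=b_{21}=a_{12}$; again $a_{11}>a_{12}$ gives the required maximum. Hence $(1,1)$ is a Berge solution. Then I would rule out the other three profiles one at a time. For $(2,2)$: Player 1 keeps action $2$, and $r_1(2,1)=a_{21}>a_{22}=r_1(2,2)$, so the opponent deviating to action $1$ strictly raises Player 1's payoff, violating the Berge condition. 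For $(1,2)$: Player 1 keeps action $1$, and $r_1(1,1)=a_{11}>a_{12}=r_1(1,2)$, so again the condition fails for Player 1. For $(2,1)$: Player 1 keeps action $2$, and $r_1(2,1)=a_{21}$ while $r_1(2,2)=a_{22}<a_{21}$ — so the Player 1 condition holds here; but then I check Player 2, who keeps action $1$, with $r_2(1,1)=a_{11}>a_{12}=r_2(2,1)$, so the condition fails for Player 2. Thus $(1,1)$ is the unique Berge solution.

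There is essentially no obstacle here — the argument is a finite case check — but the one point requiring care is bookkeeping: one must be consistent about which index of $a_{ij}$ is the row (Player 1) and which the column (Player 2), and apply the convention $b_{ij}=a_{ji}$ correctly when evaluating Player 2's payoffs. The only mild subtlety worth a sentence is that the Berge condition constrains each player's payoff against \emph{the other} players' deviations (not the player's own), which is exactly opposite to the Nash condition; this is why the Pareto-inferior Nash profile $(2,2)$ of the Prisoner's Dilemma fails to be a Berge solution while the cooperative profile $(1,1)$ succeeds. I would close by remarking that this recovers, in the classical (non-empathetic) game, the cooperative outcome that the earlier propositions obtained only after introducing a sufficiently altruistic empathy structure $\Lambda$.
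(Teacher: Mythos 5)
Your proof is correct and follows the same route as the paper: verify the Berge condition at $(1,1)$ via $a_{11}>a_{12}$ and $b_{11}>b_{21}$, then exclude the other profiles. The only difference is that you carry out the three-case uniqueness check explicitly, whereas the paper dismisses it as "easy to check"; your version is the more complete write-up.
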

\begin{proof} By definition of prisoner's dilemma game, one has the following inequalities:
$a_{12}<a_{22}< a_{11}<a_{21},$ and
 $b_{21}<b_{22}< b_{11}<b_{12}.$
 In particular
$a_{11}>a_{12}$ and $  b_{11}> b_{21}.$  This means $(1,1)$ is a Berge solution. It is easy to check that there is no other Berge solution in the prisoner's dilemma game.
\end{proof}

This result is important because the outcome $(1,1)$ appears as a mutual support between the relay nodes. Note that such an outcome is not 
possible in the Nash  prisoner's dilemma game. The Nash equilibria is not able to predict observed outcomes in practice in the unmodified game
 while the 
Berge solution is predicting a better outcome  as $(1,1)$ is observed in many experimental setups even in the one-shot game case.
Berge solution occurs when players are mutually supportive in the prisoner's dilemma game.  It means that Player 1 supports Player 2 and Player 2 supports Player 1. The next proposition establishes a connection between positive partial mutual altruism and  Berge solution in $2\times 2$ games with dominant strategy

\begin{proposition}
Positive partial mutual altruism leads to the Berge solution in the prisoner's dilemma game.
\end{proposition}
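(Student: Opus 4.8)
The plan is to show that a sufficiently strong mutual positive empathy $\lambda_{12},\lambda_{21}>0$ turns the action profile $(1,1)$ into a Berge solution of the \emph{empathetic} prisoner's dilemma $G_\Lambda$, and then to invoke the earlier propositions to conclude it is also the unique such profile. Concretely, I would fix $\Lambda=\left(\begin{array}{ll} 1 & \lambda_{12}\\ \lambda_{21} & 1\end{array}\right)$ with $\lambda_{12},\lambda_{21}>0$, and take the symmetric prisoner's dilemma payoffs $a_{21}>a_{11}>a_{22}>a_{12}$ with $b_{ij}=a_{ji}$, exactly as in the proof of Proposition 1. The Berge condition at $(1,1)$ requires two inequalities: $r_1^\Lambda(1,1)\ge r_1^\Lambda(1,a_{-1})$ for the unilateral change of Player~2's component, and $r_2^\Lambda(1,1)\ge r_2^\Lambda(a_{-2},1)$ for the change of Player~1's component. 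Writing these out with the empathetic payoff matrix, the first reduces to $a_{11}^\Lambda\ge a_{21}^\Lambda$, i.e. $a_{11}-a_{21}+\lambda_{12}(a_{11}-a_{12})\ge 0$, and the second to $b_{11}^\Lambda\ge b_{12}^\Lambda$, i.e. $a_{11}-a_{21}+\lambda_{21}(a_{11}-a_{12})\ge 0$.

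The key observation is that these are precisely the inequalities already established in the proof of Proposition~1: both hold as soon as $\min(\lambda_{12},\lambda_{21})>\frac{a_{21}-a_{11}}{a_{11}-a_{12}}>0$. So the first step is to recall that bound, note that under it action~$1$ is no longer dominated and $(1,1)$ is a strict Nash equilibrium of $G_\Lambda$, and then read off that the very same sign pattern makes $(1,1)$ satisfy the Berge inequalities. Since in a $2\times2$ game each player has only one alternative action and no coalition of ``other players'' beyond the single opponent, checking the two scalar inequalities above is exhaustive — there is nothing else to verify for the Berge property.

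For uniqueness I would argue that under the stated empathy regime $G_\Lambda$ is itself (strategically equivalent to) a coordination game — this is exactly the content of the classification in Section~\ref{sec:two} and the Proposition~1 discussion, where $(1,1)$ emerges as the Pareto-dominant pure equilibrium while $(2,2)$ survives as the other pure equilibrium. One then checks, as in the proof of the proposition ``In the prisoner's dilemma game, $(1,1)$ is the unique Berge solution'', that none of $(1,2)$, $(2,1)$, $(2,2)$ can be Berge: for each of these profiles at least one player's payoff strictly increases when the \emph{other} player switches toward action~$1$, because $a_{11}^\Lambda>a_{21}^\Lambda$ and $b_{11}^\Lambda>b_{12}^\Lambda$ together with the ordering of the remaining entries forces a profitable deviation for the opponent. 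Assembling these, $(1,1)$ is the unique Berge solution of the empathetic prisoner's dilemma.

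The main obstacle I anticipate is bookkeeping rather than conceptual: the empathetic payoff entries $a_{ij}^\Lambda=\lambda_{11}a_{ij}+\lambda_{12}b_{ij}$ and $b_{ij}^\Lambda=\lambda_{22}b_{ij}+\lambda_{21}a_{ij}$ must be expanded carefully using $b_{ij}=a_{ji}$, and one must be attentive that the Berge condition perturbs the \emph{opponent}'s coordinate (not one's own), which is the reverse of the Nash condition — it is easy to write the wrong comparison. A secondary subtlety is that if one allows the non-symmetric case or does not impose the threshold on \emph{both} $\lambda_{12}$ and $\lambda_{21}$, only one of the two Berge inequalities is guaranteed; so the statement should be read as asserting existence of such positive mutual altruism coefficients (any pair above the threshold works), which is what ``positive partial mutual altruism leads to the Berge solution'' is taken to mean here.
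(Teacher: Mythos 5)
Your reading of the proposition diverges from the paper's, and your execution contains exactly the bookkeeping slip you flagged at the end of your proposal. The paper's proof does not verify that $(1,1)$ is a Berge solution of the empathetic game $G_{\Lambda}$; it computes the \emph{Nash} equilibria of $G_{\Lambda}$ (tabulated over regimes of $\lambda_{12},\lambda_{21}$) and observes that when both coefficients are positive and high the Nash outcome is $(1,1)$, which is the Berge solution of the \emph{original} prisoner's dilemma established in the preceding proposition. ``Leads to'' means that selfish (Nash) play of the empathetic payoffs lands on the Berge outcome of the unmodified game; the Berge property itself is never asserted of $G_{\Lambda}$.

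More seriously, the two inequalities you label as the Berge conditions at $(1,1)$ are in fact the Nash conditions. With the paper's indexing ($a_{ij},b_{ij}$ indexed by Player 1's action $i$ and Player 2's action $j$), the Berge condition for Player 1 at $(1,1)$ perturbs Player 2's coordinate and reads $a_{11}^{\Lambda}\geq a_{12}^{\Lambda}$, not $a_{11}^{\Lambda}\geq a_{21}^{\Lambda}$; similarly for Player 2 it reads $b_{11}^{\Lambda}\geq b_{21}^{\Lambda}$, not $b_{11}^{\Lambda}\geq b_{12}^{\Lambda}$. Expanding with $b_{ij}=a_{ji}$ gives
$$a_{11}^{\Lambda}-a_{12}^{\Lambda}=(a_{11}-a_{12})+\lambda_{12}(a_{11}-a_{21}),$$
which is \emph{decreasing} in $\lambda_{12}$ because $a_{21}>a_{11}$, and turns negative once $\lambda_{12}>\frac{a_{11}-a_{12}}{a_{21}-a_{11}}$. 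So in the very regime of high mutual altruism that makes $(1,1)$ a Nash equilibrium of $G_{\Lambda}$ (namely $\min(\lambda_{12},\lambda_{21})>\frac{a_{21}-a_{11}}{a_{11}-a_{12}}$), the profile $(1,1)$ can fail to be a Berge solution of $G_{\Lambda}$: a strongly altruistic Player 1 prefers the outcome in which Player 2 grabs the temptation payoff $a_{21}$. Your claim is therefore not repaired by fixing the algebra; the statement that is actually true, and that the paper proves via its table of Nash outcomes, is that the Nash equilibrium of $G_{\Lambda}$ coincides with the Berge solution $(1,1)$ of $G_{I}$.
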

\begin{proof}
Let $\lambda_{12}$ and $\lambda_{21}$  be both positive and of high level. The Nash outcomes of the empathetic prisoner's dilemma game are summarized in Table \ref{tab:sum2}.

\begin{table}[htb]
\begin{tabular}{|c|c|c|c|c|}
  \hline
  %% after \\: \hline or \cline{col1-col2} \cline{col3-col4} ...
 Player 1 $\backslash$ Player 2&  $\lambda_{21}$  Negative & Low & Medium &  High \\ \hline
 $\lambda_{12}$  {\color{green} High} & { \color{green} 1}2 &  {\color{green} 1}2 & {\color{green} 1 1}&   {\color{green} 11}\\
 $\lambda_{12}$  Medium & 22 & 22& {\color{green} 11},22, x{ \color{green}1}+(1-x)2 & {\color{green} 11} \\
  $\lambda_{12}$  Low & 22 & 22& 22&  2{\color{green} 1}\\ \hline
   $\lambda_{12}$ Negative & 22&  22 &  22&2{\color{green} 1}\\   \hline
\end{tabular}
 \caption{Summary of the outcomes. When $\lambda_{12}$ and $\lambda_{21}$  are both positive and of high level, the Nash equilibria of the empathetic game coincides with the  Berge solution  $(1,1).$   }\label{tab:sum2} 
\end{table}
It follows that when $\lambda_{12}$ and $\lambda_{21}$  are both positive and of high level, the Nash equilibria of the empathetic game coincides with the  Berge solution  $(1,1).$ This completes the proof.
\end{proof}

%%\section*{Discussion and future direction}

\section{Inconsistency of empathy structure} \label{incon} 

In this section we examine the consistency of some empathy profiles at different level of reasoning. The 1-level game is the one obtained by applying the matrix $\Lambda$ to the payoff vector $ r=\left(\begin{array}{c} r_1\\  r_2\end{array}\right)=r^{I}.$ Thus,
$$
r^{\Lambda,1}=r^{\Lambda}=(r^{\Lambda}_1,r^{\Lambda}_2)=\Lambda r,\ r^{\Lambda,0}:=r,
$$
At the $k-$th level of empathy the game payoff vector becomes
$$
r^{\Lambda,k}=\Lambda r^{\Lambda,k-1}=\Lambda^k r.
$$

\begin{definition}
The empathy structure is consistent if the equilibrium structure of $k-$th level game is unchanged for any $k\geq 1.$
\end{definition}

\begin{example}[Consistent empathy profile]
The identity matrix  $\left(\begin{array}{cc} 1 & 0\\  0 & 1\end{array}\right)$ is a consistent empathy structure.
\end{example}

\begin{example}[Consistent empathy profile]
Let $\rho>0,$ and $\Lambda=\frac{\rho}{2}\left(\begin{array}{cc} 1 & 1\\  1 & 1\end{array}\right).$ Then for any $k\geq 1,$  $ \Lambda^k=\rho^{k-1} \Lambda.$
It turns out  the payoff vector at the $k-$th level empathetic game is 
$$
r^{\Lambda,k}=\Lambda^k r=\rho^{k-1} \Lambda r.
$$
Since $\rho^{k-1}>0$  the $k-$level empathy game is strategically equivalent to the $1-$level empathy game. This means that $\frac{\rho}{2}\left(\begin{array}{cc} 1 & 1\\  1 & 1\end{array}\right)$ is consistent for $\rho>0,$ i.e. in the partially equally altruism case.
\end{example}

\begin{example}[Inconsistent empathy profile]
Let $\rho<0,$ and $\Lambda=\frac{\rho}{2}\left(\begin{array}{cc} 1 & 1\\  1 & 1\end{array}\right).$ Then for any $k\geq 1,$ $ \Lambda^k=\rho^{k-1} \Lambda.$
It turns out  the payoff vector at the $(2k+1)$-th level empathetic game is 
$$
r^{\Lambda,2k+1}=\Lambda^{2k+1} r=\rho^{2k} \Lambda r.
$$
Since $\rho^{2k}>0$  the $(2k+1)$-level empathy game is strategically equivalent to the $1-$level empathy game. This means that $\frac{\rho}{2}\left(\begin{array}{cc} 1 & 1\\  1 & 1\end{array}\right)$ 

However $\rho^{2k+1}<0$  the $2k$-level empathy game may not be strategically equivalent to the $1-$level empathy game. This means that $\frac{\rho}{2}\left(\begin{array}{cc} 1 & 1\\  1 & 1\end{array}\right)$ 
is INconsistent for $\rho<0$  and $r=r^{I}$ non-trivial vectorial function.
\end{example}

\begin{proposition}[Sufficient condition for consistence]
If there exists a positive sequence $\epsilon_k>0$ such that $\Lambda^k=\epsilon_k \Lambda$ then  the empathy structure $\Lambda$ is consistent.
\end{proposition}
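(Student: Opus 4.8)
The plan is to show that the hypothesis $\Lambda^k = \epsilon_k \Lambda$ with $\epsilon_k > 0$ forces every $k$-level game $r^{\Lambda,k} = \Lambda^k r$ to be a positive scalar multiple of the $1$-level game $r^{\Lambda,1} = \Lambda r$, and then to invoke the standard fact that multiplying every player's payoff function by the same strictly positive constant leaves the equilibrium structure untouched. Concretely, I would first write $r^{\Lambda,k} = \Lambda^k r = \epsilon_k (\Lambda r) = \epsilon_k\, r^{\Lambda,1}$ for every $k \ge 1$, which is an immediate substitution using the definition $r^{\Lambda,k} = \Lambda^k r$ recalled just before the statement.

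Next I would argue that strategic equivalence is preserved under multiplication by $\epsilon_k > 0$. In the $2\times 2$ setting this is exactly the reduction already carried out in the Proposition on $A^{\Lambda}$ versus $\bar A^{\Lambda}$: the pure-strategy best responses depend only on the \emph{signs} of the payoff differences $\beta_1, \beta_2$, and the interior equilibrium $x = \beta_2/(\beta_1+\beta_2)$ is invariant under scaling $(\beta_1,\beta_2) \mapsto (\epsilon_k \beta_1, \epsilon_k \beta_2)$ since $\epsilon_k$ cancels in the ratio. More generally, for each player $i$ the empathetic payoff at level $k$ is $\epsilon_k$ times the payoff at level $1$, so for any fixed opponent profile the $\arg\max$ over $i$'s own strategies is unchanged; hence the Nash correspondence — and a fortiori the set of pure equilibria, mixed equilibria, and, via the deviant-resilience inequalities, the ESS set — coincides for the level-$k$ and level-$1$ games. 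Therefore the equilibrium structure of the $k$-level game equals that of the $1$-level game for all $k \ge 1$, which is precisely the Definition of consistency.

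I do not anticipate a serious obstacle here; the proposition is essentially a bookkeeping consequence of the definitions plus the scaling-invariance of equilibria. The one point that deserves a sentence of care is the case $\epsilon_k = 0$ is excluded by hypothesis (the sequence is strictly positive), so we never collapse to the degenerate constant game; and one should note in passing that the condition is only \emph{sufficient} — the examples with $\rho < 0$ show that if the proportionality constant is allowed to change sign across levels, consistency can fail, so positivity of $\epsilon_k$ is doing real work. I would close by remarking that Example~2 (the rank-one altruistic matrix with $\rho>0$, where $\Lambda^k = \rho^{k-1}\Lambda$) is the prototypical instance covered by this proposition.
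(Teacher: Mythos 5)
Your proposal is correct and follows essentially the same route as the paper's own proof: substitute the hypothesis to get $r^{\Lambda,k}=\epsilon_k\,\Lambda r=\epsilon_k\,r^{\Lambda,1}$ and conclude strategic equivalence from positive scaling. Your added justification of why positive scaling preserves the equilibrium (and ESS) structure is a welcome elaboration of a step the paper simply asserts as immediate.
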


\begin{proof}
The proof is immediate. Let  $\epsilon_k>0$ and the matrix $\Lambda$ satisfying the relation  $\Lambda^k=\epsilon_k \Lambda.$  This means that the payoff vector of the $k$-th level game  is
 $r^{\Lambda,k}= \epsilon_k \Lambda r.$ Since $\epsilon_k>0$ the $k-$th level empathy game is strategically equivalent to the $1-$level empathy game for any $k\geq 1.$
\end{proof}

\begin{proposition} \label{prott} The empathy structures  $$ \left(\begin{array}{cc} \lambda_{11} & \lambda_{12} \\  \lambda_{12}  & \lambda_{22} \end{array}\right)$$
such that 
$$
\left\{\begin{array}{c} 
\lambda_{ii} \ \mbox{root of }\ \ x^2-\epsilon x+y=0,\\
\lambda_{12}\lambda_{21}=y,
\end{array}
\right.
$$
are solutions of the
 system $\Lambda^2=\epsilon \Lambda,\  \epsilon>0.$ These solutions are consistent empathy profiles. 
\end{proposition}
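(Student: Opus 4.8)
The plan is to reduce everything to two scalar facts about $\Lambda$ — that its trace equals $\epsilon$ and its determinant vanishes — and then to invoke the Cayley–Hamilton theorem together with the preceding sufficient condition for consistency.

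First I would unpack the hypotheses. Reading the first condition as the statement that $\lambda_{11}$ and $\lambda_{22}$ are the two (not necessarily distinct) roots of $x^{2}-\epsilon x+y=0$, Vieta's formulas give $\lambda_{11}+\lambda_{22}=\epsilon$ and $\lambda_{11}\lambda_{22}=y$; combined with the second hypothesis $\lambda_{12}\lambda_{21}=y$ this yields
$$\operatorname{tr}\Lambda=\lambda_{11}+\lambda_{22}=\epsilon,\qquad \det\Lambda=\lambda_{11}\lambda_{22}-\lambda_{12}\lambda_{21}=y-y=0.$$
Then by the Cayley–Hamilton theorem $\Lambda^{2}-(\operatorname{tr}\Lambda)\,\Lambda+(\det\Lambda)\,I=0$, which collapses to $\Lambda^{2}=\epsilon\Lambda$. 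If one prefers to avoid citing Cayley–Hamilton, the same conclusion follows by multiplying the $2\times2$ product out entrywise: the $(i,i)$ entry of $\Lambda^{2}$ is $\lambda_{ii}^{2}+\lambda_{12}\lambda_{21}=\lambda_{ii}^{2}+y=\epsilon\lambda_{ii}$ precisely because $\lambda_{ii}$ solves $x^{2}-\epsilon x+y=0$, while the $(1,2)$ and $(2,1)$ entries are $\lambda_{12}(\lambda_{11}+\lambda_{22})=\epsilon\lambda_{12}$ and $\lambda_{21}(\lambda_{11}+\lambda_{22})=\epsilon\lambda_{21}$.

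Next I would promote $\Lambda^{2}=\epsilon\Lambda$ to $\Lambda^{k}=\epsilon^{\,k-1}\Lambda$ for all $k\ge 1$ by a one-line induction: $\Lambda^{k+1}=\Lambda\cdot\Lambda^{k}=\epsilon^{\,k-1}\Lambda^{2}=\epsilon^{\,k}\Lambda$. Since $\epsilon>0$, the scalars $\epsilon_{k}:=\epsilon^{\,k-1}$ form a positive sequence with $\Lambda^{k}=\epsilon_{k}\Lambda$, so the hypotheses of the sufficient-condition proposition stated just above are met and $\Lambda$ is a consistent empathy profile.

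There is no substantive obstacle here; the proof is essentially a computation, and the only issue is bookkeeping around the degenerate cases. The point that needs care is the repeated-root situation $\lambda_{11}=\lambda_{22}$: a genuine double root of $x^{2}-\epsilon x+y$ must equal $\epsilon/2$ with $y=\epsilon^{2}/4$, so $\lambda_{11}+\lambda_{22}=\epsilon$ and $\lambda_{11}\lambda_{22}=y$ continue to hold and the argument is unaffected. One should simply not read the hypothesis as permitting $\lambda_{11}=\lambda_{22}$ to be a \emph{simple} root with a distinct companion root $\epsilon-\lambda_{11}$, since then $\operatorname{tr}\Lambda=\epsilon$ (hence the off-diagonal identities $\lambda_{1j}(\lambda_{11}+\lambda_{22})=\epsilon\lambda_{1j}$) can fail; under the natural reading in which $\lambda_{11},\lambda_{22}$ are \emph{the} two roots, the hypothesis $\epsilon>0$ rules this degeneracy out automatically.
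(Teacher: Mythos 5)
Your proof is correct and its core is the same computation the paper performs: expand $\Lambda^{2}$ entrywise and match it against $\epsilon\Lambda$. You are, however, noticeably more complete than the paper's own argument, which merely rewrites $\Lambda^{2}=\epsilon\Lambda$ as the four scalar equations, substitutes $\lambda_{12}\lambda_{21}=y$, and stops --- it never observes that the off-diagonal equations $\lambda_{1 2}(\lambda_{11}+\lambda_{22})=\epsilon\lambda_{12}$ and $\lambda_{21}(\lambda_{11}+\lambda_{22})=\epsilon\lambda_{21}$ require $\operatorname{tr}\Lambda=\epsilon$ (your Vieta step), nor does it promote $\Lambda^{2}=\epsilon\Lambda$ to $\Lambda^{k}=\epsilon^{k-1}\Lambda$ and invoke the preceding sufficient-condition proposition to actually deliver consistency. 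Your Cayley--Hamilton framing ($\operatorname{tr}\Lambda=\epsilon$, $\det\Lambda=0$) is a clean packaging of the same facts, and your caveat about the repeated-root reading of the hypothesis is a genuine subtlety the paper silently glosses over: if both diagonal entries were allowed to equal the same simple root, the off-diagonal identities would fail. In short, no gap on your side; the gaps are in the paper's version.
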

\begin{proof}
$$
\left(\begin{array}{cc} \lambda_{11} & \lambda_{12} \\  \lambda_{21}  & \lambda_{22} \end{array}\right).
\left(\begin{array}{cc} \lambda_{11} & \lambda_{12} \\  \lambda_{21}  & \lambda_{22} \end{array}\right)
=
\left(\begin{array}{cc} \lambda_{11}^2+ \lambda_{12}\lambda_{21}& \lambda_{12}(\lambda_{11}+\lambda_{22}) \\    \lambda_{21}(\lambda_{11}+\lambda_{22}) & \lambda_{22} ^2+\lambda_{12}\lambda_{21}\end{array}\right)$$

The matrix equation $\Lambda^2=\epsilon \Lambda$ becomes
$$
\left\{\begin{array}{c} 
 \lambda_{11}^2+ \lambda_{12}\lambda_{21}=\epsilon  \lambda_{11},\\
 \lambda_{12}(\lambda_{11}+\lambda_{22})=\epsilon  \lambda_{12},\\
  \lambda_{21}(\lambda_{11}+\lambda_{22})=\epsilon  \lambda_{21},\\
  \lambda_{22} ^2+\lambda_{12}\lambda_{21}=\epsilon \lambda_{22}.
\end{array}
\right.
$$ By choosing $\lambda_{12}\lambda_{21}=y$ one obtains
$$
\left\{\begin{array}{c} 
\lambda_{12}\lambda_{21}=y,\\
 \lambda_{11}^2+ y=\epsilon  \lambda_{11},\\
 \lambda_{12}(\lambda_{11}+\lambda_{22})=\epsilon  \lambda_{12},\\
  \lambda_{21}(\lambda_{11}+\lambda_{22})=\epsilon  \lambda_{21},\\
  \lambda_{22} ^2+y=\epsilon \lambda_{22}.
\end{array}
\right.
$$ 

which completes the proof.
\end{proof}

Notice  that for $x=\frac{\epsilon}{2}=\lambda_{ij},\ \ y=x^2,$  the empathy matrix  $\Lambda=\frac{\epsilon}{2}\left(\begin{array}{cc} 1 & 1\\  1 & 1\end{array}\right)$ is a solution to the  system of Proposition \ref{prott}. Similarly,
 $x=\epsilon=\lambda_{ii},\ \ y=0$ and  $\Lambda=\epsilon\left(\begin{array}{cc} 1 & 0\\  0& 1\end{array}\right)=\epsilon I,$ is a solution to the  system of Proposition \ref{prott}.

\subsection{Infinite hierarchy}
We now examine the limit of the matrix $\Lambda^k$ as $k$ grows without bound. As we have seen above $\Lambda^k$ may not converge in general. If the maximum modulus of the eigenvalues satisfy $\rho(\Lambda)<1,$ then $ \Lambda^k$ goes to $\left(\begin{array}{cc} 0 & 0\\  0 & 0\end{array}\right)$ as $k$ goes to infinity. Thus, the infinite hierarchy empathy game becomes a trivial one in this case.  If $ \lim_{k} \Lambda^k= \Lambda_{\infty}$ and the infinite hierarchy of empathy game payoff vector is $\Lambda_{\infty}r.$ The equilibrium structure of the game $ \Lambda_{\infty}r$ may be different than the equilibrium structure of the finite hierarchy of empathy with payoff vector  $\Lambda^k  r.$

\begin{proposition}[Infinitely Consistent Empathy Profiles]The only empathy profiles in a  generic $2\times 2$ empathetic bimatrix game which are infinitely consistent are $I$ and
and
 $ \left(\begin{array}{cc} \lambda_{11} & \frac{\lambda_{11}(1-\lambda_{11})}{\lambda_{21}}\\  \lambda_{21} & 1-\lambda_{11}\end{array}\right)$  for $(\lambda_{11},\lambda_{21})\in \mathbb{R}^2.$

\end{proposition}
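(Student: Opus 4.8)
The plan is to prove the equivalence in two directions: first that $I$ and the displayed one-parameter family are the only empathy matrices that can possibly be infinitely consistent, and then that each of them indeed is. Throughout I read ``infinitely consistent'' as: $\Lambda_{\infty}:=\lim_{k\to\infty}\Lambda^{k}$ exists, and for the generic payoff vector $r=r^{I}$ all the games $\Lambda r,\Lambda^{2}r,\dots$ together with the limit game $\Lambda_{\infty}r$ share the same equilibrium structure (the same classification among coordination, anti-coordination, discoordination and dominant-strategy games).

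\textbf{Step 1 (spectral reduction).} I would first note that infinite consistency forces $\{\Lambda^{k}\}_{k}$ to converge, so $\rho(\Lambda)\le 1$, and any eigenvalue of modulus $1$ must equal $1$ and be semisimple: a non-trivial Jordan block at $1$ makes $\Lambda^{k}=I+kN$ blow up, while an eigenvalue $-1$ or a non-real eigenvalue on the unit circle makes $\Lambda^{k}$ oscillate without limit. For a real $2\times 2$ matrix this leaves only three possibilities: (a) $\rho(\Lambda)<1$; (b) $\Lambda=I$; (c) the spectrum of $\Lambda$ is $\{1,\nu\}$ with $|\nu|<1$, $\nu\neq 1$. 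Case (a) dies immediately, since then $\Lambda^{k}\to 0$ and $\Lambda_{\infty}r$ is the wholly degenerate game where every profile is an equilibrium, which is not the structure of the generic level-$1$ game.

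\textbf{Step 2 (killing $\nu\neq 0$ in case (c)).} Here I would use the spectral decomposition $\Lambda=P+\nu Q$ with $P,Q$ the rank-one spectral projections ($P+Q=I$, $P^{2}=P$, $Q^{2}=Q$, $PQ=QP=0$), so that $\Lambda^{k}=P+\nu^{k}Q\to\Lambda_{\infty}=P$, a rank-one matrix. By the reduction to $\bar{A}^{\Lambda}$ used earlier, the equilibrium type at any level is read off from the signs of four quantities of the form $\Lambda^{k}_{i1}u_{i}+\Lambda^{k}_{i2}v_{i}$ (two per player), where the $u_{i},v_{i}$ are differences of original payoffs that can be prescribed freely through the entries $a_{ij},b_{ij}$. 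Since $\det\Lambda=\nu\neq 0$, $\Lambda$ has rank $2$, so not both of its rows are parallel to the (rank-one) rows of $P$; fix one such row, say the first, so that the directions $(\lambda_{11},\lambda_{12})$ and $(P_{11},P_{12})$ differ. Then the half-planes $\{\lambda_{11}u+\lambda_{12}v>0\}$ and $\{P_{11}u+P_{12}v>0\}$ differ, and I can choose the original payoffs, generically, so that at level $1$ the game is a coordination game whereas at the limit exactly one of Player 1's two best-response inequalities has flipped sign, making Player 1's action dominant there; the limit game is then a dominant-strategy game, a different class. Hence such a $\Lambda$ is not infinitely consistent. I expect this last genericity check --- that a rank-two $\Lambda$ cannot converge to its rank-one limit without some classifying inequality flipping on an open set of games, and that such a flip really does change the class --- to be the main technical point of the proof.

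\textbf{Step 3 (surviving case and converse).} What remains of (c) is $\nu=0$, i.e.\ $\mathrm{tr}\,\Lambda=1$ and $\det\Lambda=0$. By Cayley--Hamilton, $\Lambda^{2}-(\mathrm{tr}\,\Lambda)\Lambda+(\det\Lambda)I=\Lambda^{2}-\Lambda=0$, so $\Lambda$ is idempotent and $\Lambda^{k}=\Lambda=\Lambda_{\infty}$ for all $k\ge 1$; every level, including the limit, is literally the same game, so by the ``sufficient condition for consistence'' proposition (with $\epsilon_{k}\equiv 1$) $\Lambda$ is infinitely consistent as soon as the level-$1$ game is non-degenerate, which is the generic case. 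Solving $\mathrm{tr}\,\Lambda=1$, $\det\Lambda=0$ gives $\lambda_{22}=1-\lambda_{11}$ and $\lambda_{12}\lambda_{21}=\lambda_{11}(1-\lambda_{11})$; a non-trivial empathetic game forces $\lambda_{21}\neq 0$ (if $\lambda_{21}=0$ then $\lambda_{11}\in\{0,1\}$ and one player's empathetic payoff degenerates into a scalar multiple of a single original payoff), so this is exactly $\lambda_{12}=\lambda_{11}(1-\lambda_{11})/\lambda_{21}$, the announced family. Together with $\Lambda=I$ from case (b) these exhaust the candidates. For the converse, $I$ is trivially idempotent and every matrix of the family has trace $1$ and determinant $0$, hence is idempotent, so $\Lambda^{k}r=\Lambda r=\Lambda_{\infty}r$ for every $k$ and all levels coincide --- so both $I$ and the family are infinitely consistent. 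This closes the equivalence.
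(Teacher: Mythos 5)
Your proof is correct and arrives at the paper's destination by the same underlying mechanism --- infinite consistency forces $\Lambda$ to be idempotent, hence $\mathrm{tr}\,\Lambda=1$ and $\det\Lambda=0$ (or $\Lambda=I$), which is exactly the displayed family --- but you actually argue several steps that the paper merely asserts. The paper's proof declares that a diagonalizable $\Lambda$ with two eigenvalues ``must have eigenvalues $1$ and $0$'' with no justification and never treats the non-diagonalizable case; your Step 1 derives the spectral restriction from convergence of $\Lambda^k$ (excluding a Jordan block at $1$, unit-modulus eigenvalues other than $1$, and the $\rho(\Lambda)<1$ case via the degenerate limit game), and your Step 3 supplies the converse, which the paper leaves implicit. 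The one place you remain at the level of a sketch is Step 2: to rule out $0<|\nu|<1$ you still need a concrete choice of payoffs for which a classifying inequality flips between $\Lambda$ and its rank-one limit $P$ in a way that changes the game class; you correctly identify this as the main technical point, and it is precisely the point the paper's proof omits entirely. Finally, your explicit reading of ``infinitely consistent'' (the limit exists and shares the equilibrium structure of every finite level) is the only reading under which the proposition holds --- under the weaker reading ``all finite levels agree,'' the paper's own earlier example $\frac{\rho}{2}\left(\begin{smallmatrix}1&1\\1&1\end{smallmatrix}\right)$ with $\rho>0$, $\rho\neq1$ would be a counterexample --- so pinning that definition down, as you do, is a genuine improvement over the paper's presentation.
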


\begin{proof} If the empathy matrix $\Lambda$ is diagonalizable but has a single eigenvalue, it must be $I.$ If $\Lambda$ is diagonalizable and has two eigenvalues, they must be 1 and 0. There is a basis of matrix $P$ such that $\Lambda=PDP^{-1}$ with $D=\left(\begin{array}{cc}1 & 0 \\ 0 & 
0\end{array}\right).$ So,  $$\Lambda^2=PD^2P^{-1}=PDP^{-1}=\Lambda$$
 Since the eigenvalues are  $$ \frac{tr(\Lambda)}{2}\pm \sqrt{\frac{tr(\Lambda)^2}{4} -\mbox{det} \Lambda},$$ it
implies that $\mbox{tr} \Lambda=1$ and $\mbox{det} \Lambda= 0.$ So given $\lambda_{11}$ and $\lambda_{21}$, row 1 must be a
multiple of row 2 by $\frac{\lambda_{11}}{ \lambda_{21}}.$
\end{proof}

\section*{Acknowledgment} The authors are grateful to the NYU  editing team for their comments that helped improve the initial version of this paper. 
This research work is supported by U.S. Air Force Office of Scientific Research under grant number FA9550-17-1-0259.
\section{Conclusion} \label{sec:five} 
We have presented novel methods that incorporate users' empathy in  $2\times 2$ matrix games. We have examined both empathy and antipathy, selfishness and selflessness in one single unified framework. We observed that empathy plays a crucial role in these games. It supports cooperation if the empathies signs are aligned with the payoffs signs. It helps in breaking limit cycling by adopting different empathy structure. It allows the survival of dominates strategies. It is shown that a wide range of empathetic evolutionary game dynamics converges to the set of ESS in non-trivial $2\times 2$ games. However, number of issues remain to be solved. Beyond these  promising preliminary results, we aim to examine outcomes and limitations for larger class of games such as mean-field-type games.
We have seen that more fairness and less inequity outcomes are possible thanks to the presence of partial altruism, empathy-cooperation and mutual support. Thus, the fairness can evolve, if for some reason a proportion of the population employs empathy. The method above has a disadvantage. It does not explain the evolution of empathy such as perspective taking, empathic concern,  fantasy scale, personal distress and involvement scale. Empirical evidence suggests that these scales are not complete and the empathy state of a player can evolve over time.  

%%\bibliography{myRef5}
\bibliographystyle{IEEEtran}

\end{document}